\newtheorem{theorem}{Theorem}[]
\newtheorem{lemma}[]{Lemma}
\theoremstyle{definition}
\newcommand{\comment}[1]{{}}
\newcommand{\thu}{Department of Mathematics, Tsinghua University,  Beijing 100084, China}
\newcommand{\YMSC}{Yau Mathematical Sciences Center, Tsinghua University,  Beijing 100084, China}
\newcommand{\bimsa}{Yanqi Lake Beijing Institute of Mathematical Sciences and Applications, Beijing 100407, China }
\RenewDocumentCommand{\tr}{m}{\operatorname{Tr} \left( #1 \right)}
\begin{document}

\title{Enhancing classical simulation with noisy quantum devices}

\author{Ruiqi Zhang}
\affiliation{\YMSC}
\affiliation{\thu}

\author{Fuchuan Wei}
\affiliation{\YMSC}
\affiliation{\thu}

\author{Zhaohui Wei}
\thanks{weizhaohui@gmail.com}
\affiliation{\YMSC}
\affiliation{\bimsa}


\begin{abstract}

As quantum devices continue to improve in scale and precision, a central challenge is how to effectively utilize noisy hardware for meaningful computation.
Most existing approaches aim to recover noiseless circuit outputs from noisy ones through error mitigation or correction.
Here, we show that noisy quantum devices can be directly leveraged as computational resources to enhance the classical simulation of quantum circuits.
We introduce the \emph{Noisy-device-enhanced Classical Simulation} (NDE-CS) protocol, which improves stabilizer-based classical Monte Carlo simulation methods by incorporating data obtained from noisy quantum hardware.
Specifically, NDE-CS uses noisy executions of a target circuit together with noisy Clifford circuits to learn how the target circuit can be expressed in terms of Clifford circuits under realistic noise.
The same learned relation can then be reused in the noiseless Clifford limit, enabling accurate estimation of ideal expectation values with substantially reduced sampling cost.
Numerical simulations on Trotterized Ising circuits demonstrate that NDE-CS achieves orders-of-magnitude reductions in sampling cost compared to the underlying purely classical Monte Carlo approaches from which it is derived, while maintaining the same accuracy.
We also compare NDE-CS with Sparse Pauli Dynamics (SPD), a powerful  classical framework capable of simulating quantum circuits at previously inaccessible scales, and provide an example where the cost of SPD scales exponentially with system size, while NDE-CS scales much more favorably.
These results establish NDE-CS as a scalable hybrid simulation approach for quantum circuits, where noise can be harnessed as a computational asset.
\end{abstract}

\maketitle


\section{Introduction}

Quantum computing is approaching a pivotal milestone where its advantages over classical computation 
may soon be demonstrated across a range of practical tasks~\cite{shorAlgorithmsQuantumComputation1994,doi:10.1126/science.273.5278.1073, PhysRevLett.103.150502, doi:10.1137/S0097539795293172}.
Nevertheless, noise remains an intrinsic feature of current quantum devices, 
posing a major obstacle to their reliable operations and large-scale deployments~\cite{wangNoiseinducedBarrenPlateaus2021b,sunSuddenDeathQuantum2024,shaoSimulatingNoisyVariational2024,aharonovPolynomialTimeClassicalAlgorithm2023a,fontanaClassicalSimulationsNoisy2025a}.  
Most near-term quantum algorithms depend on obtaining accurate, effectively noiseless expectation values to yield meaningful results~\cite{vandenbergProbabilisticErrorCancellation2023,kimEvidenceUtilityQuantum2023a,haghshenasDigitalQuantumMagnetism2025a,granetSuperconductingPairingCorrelations2025,alamProgrammableDigitalQuantum2025}. 
Because quantum error correction remains prohibitively expensive, a broad spectrum of quantum error mitigation (QEM) techniques has been developed to address the detrimental effects of hardware noise~\cite{temmeErrorMitigationShortDepth2017,giurgica-tironDigitalZeroNoise2020,qinOverviewQuantumError2022, caiQuantumErrorMitigation2023a,vandenbergProbabilisticErrorCancellation2023,kimEvidenceUtilityQuantum2023a,guoExperimentalQuantumComputational2024}.  
Although QEM can recover high-fidelity observables for small systems,  
its reliance on repeated circuit executions, extrapolation, and probabilistic rescaling leads to significant computational overhead,  
making large-scale applications increasingly impractical as circuit size and depth grow.

These limitations motivate a natural and fundamental question.
\textit{Can noisy quantum devices be directly employed as computational resources to assist in solving classically hard problems, rather than being corrected or mitigated?}  
In this paper, we will show that a possible avenue toward this goal is to use noisy quantum data to enhance the classical simulation of quantum circuits, 
thus transforming unavoidable hardware noise into a constructive element of computation.

Classical simulation of quantum circuits plays a central role in understanding quantum many-body dynamics and solving practical problems in physics and chemistry.
Beyond validating quantum processors, it enables the study of quantum evolution and correlations that are difficult to access analytically.
Various simulation frameworks, including tensor-network contraction~\cite{Markov_2008,Or_s_2014}, stabilizer-rank and quasiprobability decomposition~\cite{bravyiImprovedClassicalSimulation2016,seddonQuantifyingMagicMultiqubit2019,seddonQuantifyingQuantumSpeedups2021,benninkUnbiasedSimulationNearClifford2017,howardApplicationResourceTheory2017,hakkakuComparativeStudySamplingBased2021}, have been developed to approximate the behavior of non-Clifford quantum circuits.
Existing stabilizer-based simulation approaches for simulating non-Clifford quantum circuits typically rely on expressing the target parameterized quantum circuit (PQC) as a linear combination of classically efficiently simulable circuits~\cite{seddonQuantifyingMagicMultiqubit2019}. 
However, for generic non-Clifford circuits, this strategy incurs a fundamental limitation: both the number of terms in the decomposition and the $\ell_1$-norm of the associated coefficients typically grow exponentially with the number of non-Clifford gates.
As a consequence, the sampling cost required to achieve a fixed target precision scales exponentially with circuit depth, limiting the applicability of these methods to large and deep circuits.

In this work, we leverage noisy quantum devices to reduce the simulation cost of stabilizer-based classical simulation methods. 
To address these challenges in stabilizer-based approaches, we first introduce a structure-aware simulation framework, termed structure-preserving Monte Carlo~(SPMC), which decomposes a PQC into a linear combination of Clifford circuits that preserve the original circuit architecture. 
Although SPMC does not reduce the number of terms required in the decomposition, it enforces that all sampled Clifford circuits retain the same gate layout and connectivity as the target circuit. 
As a result, the sampled circuits exhibit hardware-specific noise characteristics that closely match those of the target PQC~\cite{krantzQuantumEngineersGuide2019,sackLargescaleQuantumApproximate2024}, thereby providing a natural interface between classical simulation and realistic quantum devices.

Building on SPMC, we develop the Noisy-device-enhanced Classical Simulation (NDE-CS) protocol, which exploits this interface to further reduce the simulation cost.  
Rather than relying on the full noiseless Clifford decomposition of the target PQC, NDE-CS leverages the noisy realizations of both the target circuit and its associated Clifford circuits executed on quantum hardware to identify an effective representation involving substantially fewer terms in the presence of noise.  
By incorporating randomized Pauli insertions, the learned noisy decomposition remains valid for the estimation of observables of the noiseless target circuit.
As a result, the Clifford decomposition learned from noisy circuit executions can be applied to noiseless Clifford circuits, enabling the estimation of expectation values of the noiseless target circuit using only classically simulable data.

Numerical simulations demonstrate that NDE-CS achieves orders-of-magnitude improvements over purely classical Monte Carlo approaches.  
For example, for a 16-qubit Trotterized Ising circuit, Static and Dynamic Monte Carlo methods~\cite{seddonQuantifyingMagicMultiqubit2019} require an estimated $\sim 10^{41}$ samples to reach a relative error of $10^{-2}$, whereas NDE-CS attains comparable accuracy using only $\sim 10^{5}$ noisy circuit executions.  
More broadly, across 10-14 qubit Trotter circuits, we observe that while the cost of Static and Dynamic Monte Carlo grows rapidly with both circuit depth and qubit number, the sampling cost of NDE-CS exhibits a markedly weaker dependence on the system size.  
Moreover, as the Trotter step number increases, the performance gap between NDE-CS and Static and Dynamic Monte Carlo methods becomes more pronounced. 
These results establish NDE-CS as a scalable hybrid quantum-classical simulation paradigm whose relative advantage becomes increasingly pronounced as quantum circuits grow larger and deeper.

In addition to Monte Carlo-based stabilizer methods, we compare NDE-CS with Sparse Pauli Dynamics (SPD), a powerful path-based classical simulation approach that has recently enabled accurate simulations of quantum circuits at system sizes that were previously difficult to access classically. By constructing a structured family of circuits, we identify a regime in which the computational cost of SPD grows exponentially with system size, while NDE-CS attains accurate estimates with a sampling cost that depends only weakly on the number of qubits. This comparison highlights a complementary regime in which access to noisy quantum hardware allows NDE-CS to provide a clear practical advantage over purely classical path-based simulation methods.

The remainder of this paper is organized as follows.  
Section~\ref{sec:preliminaries} introduces the basic notations and circuit models used throughout this work,
together with a brief review of quasiprobability sampling and stabilizer-based simulation methods.
Section~\ref{sec:structure_preserving_mc} introduces the SPMC method as the theoretical foundation of our work.  
Section~\ref{sec:noisy_device_protocol} presents the NDE-CS protocol and its theoretical guarantees.  
Section~\ref{sec:numerical_result} reports numerical results for Trotterized Ising circuits under Pauli noise and compares NDE-CS with other classical simulation methods.
Finally, Section~\ref{sec:Discussion} summarizes our findings and discusses potential extensions of the proposed framework.

\section{Preliminaries}\label{sec:preliminaries}

In this section, we present the circuit and noise models considered in this work, followed by a brief overview of several known classical simulation techniques based on quasiprobability and stabilizer methods.

\subsection{Circuit Model and Noise Model}\label{app:sec:circuit_model_and_noise_model}

In the NISQ era, parameterized quantum circuits~(PQCs) are widely employed in various near-term quantum algorithms~\cite{kandalaHardwareefficientVariationalQuantum2017,farhiQuantumApproximateOptimization2014}.  
A typical $n$-qubit PQC, denoted by $C(\bm{\theta})$, consists of a sequence of Pauli rotation gates and non-parameterized Clifford gates. Each Pauli rotation gate is written as $e^{-i\frac{\theta}{2} P}$, where $\theta$ is the rotation angle and $P \in \{I, X, Y, Z\}^{\otimes n}$ is an $n$-qubit Pauli operator. Clifford gates are unitary operators that normalize the $n$-qubit Pauli group $\mathcal{P}_n$:  
\[
\{C \in \mathrm{U}_{2^n} \mid C \mathcal{P}_n C^\dagger = \mathcal{P}_n\}.
\]

Without loss of generality, we assume that a PQC takes the form
\begin{equation}\label{eq:PQC_model}
  C(\bm{\theta}) = U_L(\theta_L) \cdots U_1(\theta_1),
\end{equation}
where $\bm{\theta} = (\theta_1, \ldots, \theta_L)$ are the rotation angles and $L$ is the circuit depth. Each unitary $U_l(\theta_l)$ is of the form 
\[
U_l(\theta_l) := e^{-i \theta_l P_l / 2} C_l,
\]
where $P_l \in \{I, X, Y, Z\}^{\otimes n}$ and $C_l$ is a Clifford gate.

To describe noise, we adopt the channel representation.  
For a unitary $U$, we denote the corresponding quantum channel by $\mathcal{U}(\rho)=U\rho U^\dagger$.  
Let $\mathcal{U}_l(\theta_l)$ be the channel associated with $U_l(\theta_l)$.  
The overall noiseless circuit channel is then
\begin{equation}\label{eq:PQC_model_channel}
  \mathcal{C}(\bm{\theta}) = \mathcal{U}_L(\theta_L) \circ \cdots \circ \mathcal{U}_1(\theta_1).
\end{equation}
Acting on an initial state $\rho$, we evaluate the observable expectation value as
\[
\langle O \rangle = \tr{O\,\mathcal{C}(\bm{\theta})(\rho)}.
\]

In realistic quantum devices, each unitary $\mathcal{U}_l(\theta_l)$ is implemented imperfectly.  
We represent the noisy implementation by a quantum channel $\tilde{\mathcal{U}}_l(\theta_l)$, which we model as
\begin{equation}\label{eq:unitary_channel_with_noise}
  \tilde{\mathcal{U}}_l(\theta_l)=\mathcal{E}_l \circ \mathcal{U}_l(\theta_l),
\end{equation}
where $\mathcal{E}_l$ is a noise channel.  
The resulting noisy circuit becomes
\[
\tilde{\mathcal{C}}(\bm{\theta})
  = \tilde{\mathcal{U}}_L(\theta_L) \circ \cdots \circ \tilde{\mathcal{U}}_1(\theta_1)
  = \mathcal{E}_L \circ \mathcal{U}_L(\theta_L) \circ \cdots \circ \mathcal{E}_1 \circ \mathcal{U}_1(\theta_1),
\]
with noisy expectation value
\[
\langle O_{\mathrm{noisy}}\rangle
  = \tr{O\,\tilde{\mathcal{C}}(\bm{\theta})(\rho)}.
\]

Throughout this work, we focus on Pauli noise, as general noise processes can be mapped to Pauli channels via Pauli twirling~\cite{bennettPurificationNoisyEntanglement1996,knill2004faulttolerantpostselectedquantumcomputation,Kern_2005,gellerEfficientErrorModels2013,wallmanNoiseTailoringScalable2016,vandenbergProbabilisticErrorCancellation2023,kimEvidenceUtilityQuantum2023a}.  
Motivated by the noise characteristics observed in current quantum hardware~\cite{vandenbergProbabilisticErrorCancellation2023,kimEvidenceUtilityQuantum2023a}, we consider the following Pauli noise model:
\begin{equation}\label{qem2}
\mathcal{E}(\rho)
  = \bigcirc_{k\in\mathcal{K}}
  \left( w_k\,\cdot + (1-w_k)\,P_k \cdot P_k^\dagger \right) (\rho),
\end{equation}
where $\bigcirc$ denotes the sequential composition of single-Pauli channels.

For each parameterized gate $U_l(\theta_l)=e^{-i\theta_l/2P_l}C_l$ with $l = 1,2,\cdots, L$, we additionally consider a Pauli noise channel aligned with its rotation axis:
\begin{equation}\label{eq:noise_channel}
  \mathcal{E}_{P_l}(\rho) = (1-\gamma_l)\rho + \gamma_l\, P_l \rho P_l^{\dagger},
\end{equation}
where $\gamma_l$ is the single-gate noise rate and the Pauli operator $P_l$ is the generator of the rotation.

\subsection{Monte Carlo and Stabilizer-Based Methods for Quantum Circuit Simulation}
\label{subsec:review_stabilizer_based_simulator}

Stabilizer circuits, consisting of Clifford operations acting on stabilizer states, can be simulated efficiently on a classical computer by the Gottesman--Knill theorem~\cite{Gottesman_1998}.  
Once non-stabilizer (``magic'') operations are introduced, the resulting circuits may become universal for quantum computing, and their classical simulation cost generally grows exponentially with the amount of magic~\cite{aaronsonImprovedSimulationStabilizer2004}.  
Building on the stabilizer framework, two major extensions have been developed to simulate circuits containing non-stabilizer components: quasiprobability-based Monte Carlo methods~\cite{benninkUnbiasedSimulationNearClifford2017,stahlkeQuantumInterferenceResource2014,pashayanEstimatingOutcomeProbabilities2015,howardApplicationResourceTheory2017,seddonQuantifyingMagicMultiqubit2019,wangQuantifyingMagicQuantum2019} and stabilizer-rank approaches~\cite{bravyiTradingClassicalQuantum2016,bravyiImprovedClassicalSimulation2016,bravyiSimulationQuantumCircuits2019,qassimCliffordRecompilationFaster2019}.  
In both families, the classical cost is controlled by a magic monotone that quantifies the deviation from the stabilizer region.

Among these extensions, quasiprobability Monte Carlo methods align most naturally with our channel-based perspective.  
Their central idea is to express a general quantum channel as a signed linear combination of channels that can be simulated efficiently using stabilizer techniques.  
Once such a decomposition is available, one samples from the associated quasiprobability distribution and evaluates stabilizer trajectories, obtaining an unbiased estimator for the target computational output.  
This reduces the simulation of a non-Clifford circuit to repeated simulations of efficiently tractable components.

There exist multiple approaches for efficiently simulating quantum circuits.
Earlier work explored decompositions over Clifford gates supplemented by Pauli reset channels~\cite{benninkUnbiasedSimulationNearClifford2017}.  
This class is strictly contained within the more general class of completely stabilizer-preserving~(CSP) maps, which form the broadest set of operations that preserve stabilizer structure~\cite{seddonQuantifyingMagicMultiqubit2019,wangQuantifyingMagicQuantum2019,saxenaQuantifyingMultiqubitMagic2022}. 
Using CSP maps yields the most general formulation of quasiprobability Monte Carlo and accommodates a wide range of gate and noise models.  
Below, we summarize this general formulation, commonly referred to as Static Monte Carlo~\cite{seddonQuantifyingMagicMultiqubit2019}.

The Static Monte Carlo (SMC) method represents an $n$-qubit channel $\mathcal{E}$ as  
\begin{equation}
  \mathcal{E}(\rho)
  = \sum_k q_k\,\mathcal{S}_k(\rho),
  \qquad \sum_k q_k = 1,
  \label{eq:qp_decomp}
\end{equation}
where each $\mathcal{S}_k$ is a CSP map (formal definitions are provided in Suppl.~Mat.~\ref{app:sec:static_MC_multi_layer}).  
The coefficients $\{q_k\}$ may be negative, and their signs encode the non-stabilizing character of $\mathcal{E}$.
Since CSP maps admit efficient stabilizer simulation, Eq.~\eqref{eq:qp_decomp} enables one to reduce the evaluation of $\mathcal{E}(\rho)$ to repeated stabilizer computations. 
The explicit sampling procedure is described below.

To estimate the expectation value $\langle O \rangle = \mathrm{Tr}\!\left(O\,\mathcal{E}(\rho)\right)$ of an observable $O$—assumed to be a polynomial-size linear combination of Pauli operators—on an input stabilizer state $\rho$, one samples indices $k$ according to the normalized quasiprobability distribution
\begin{equation}
  p_k = \frac{|q_k|}{\sum_j |q_j|},
  \label{eq:channel_robustness_distribution}
\end{equation}
and computes $\mathcal{S}_k(\rho)$, which remains a stabilizer state.  
Each sample contributes $s_k\,\mathrm{Tr}\!\left(O\,\mathcal{S}_k(\rho)\right)$ with $s_k = \mathrm{sign}(q_k)$.  
After $M$ samples, the value of the estimator is  
\begin{equation}
  \hat{O}
  = \frac{\sum_j |q_j|}{M}
    \sum_{i=1}^{M}
    s_{k_i}\,
    \mathrm{Tr}\!\left(O\,\mathcal{S}_{k_i}(\rho)\right),
  \label{eq:qp_estimator}
\end{equation}
which is unbiased for $\langle O \rangle$.  
The variance scales as $\mathrm{Var}(\hat{O}) \sim (\sum_j |q_j|)^2/M$, indicating that the simulation cost is determined by the squared $\ell_1$-norm of the coefficients $\{q_k\}$ which decompose the channel $\mathcal{E}$.

Minimizing this cost yields the channel robustness
\begin{equation}
  \mathcal{R}_*(\mathcal{E})
  = \min_{\{q_k,\mathcal{S}_k\}}
    \sum_k |q_k|,
  \label{eq:channel_robustness}
\end{equation}
which equals $1$ precisely when $\mathcal{E}$ is CSP and is strictly larger otherwise.

For a circuit composed of multiple layers and written as
\begin{equation}\label{eq:circuit_with_multi_layer}
  \mathcal{E}
  = \mathcal{E}_L \circ \cdots \circ \mathcal{E}_1,
\end{equation}
each layer $\mathcal{E}_i$ can be decomposed independently, and the Monte Carlo sampler draws a stabilizer-preserving map from each layer in sequence.  
The resulting stabilizer trajectory yields an unbiased estimator with variance  
$\mathrm{Var}(\hat{O}) \sim \prod_{l=1}^{L} \mathcal{R}_*(\mathcal{E}_l)^2/M$,  
leading to a total simulation cost on the order of  
$\mathcal{O}\!\left(\prod_{l=1}^{L}\mathcal{R}_*(\mathcal{E}_l)^2\right)$.  
Additional technical details on the multi-layer composition can be seen in Suppl.~Mat.~\ref{app:sec:static_MC_multi_layer}.

Ref.~\cite{seddonQuantifyingMagicMultiqubit2019} also introduced a local and adaptive variant termed Dynamic Monte Carlo (DMC), in which
quasiprobabilities for each intermediate decomposition will depend on the stabilizer state sampled in the previous step.
The associated simulation cost is governed by the magic capacity of the channel $\mathcal{C}(\mathcal{E})$~\cite{seddonQuantifyingMagicMultiqubit2019},
\begin{equation}
\mathcal{C}(\mathcal{E}) = 
\max_{|\phi\rangle \in \mathrm{STAB}_{2n}} 
\mathcal{R}\!\left[(\mathcal{E} \otimes \mathcal{I}_n)\, |\phi\rangle\!\langle \phi| \right],
\end{equation}
with $\mathcal{I}_n$ being the identity channel on an $n$-qubit Hilbert space, and 
$\mathcal{R}$ is the robustness of magic of a quantum state. It can be seen that $\mathcal{C}(\mathcal{E}) \leq \mathcal{R}_*(\mathcal{E})$, i.e., the simulation cost of DMC is no greater than that of SMC.

\section{
    Structure-Preserving Monte Carlo Simulation
}\label{sec:structure_preserving_mc}

The SMC method provides a general quasiprobability-based framework for circuit simulation. However, for a deep PQC, obtaining a CSP decomposition of the entire circuit that minimizes the total $\ell_1$-norm of the decomposed coefficients is generally infeasible. 
As a consequence, the associated sampling complexity typically grows exponentially with circuit depth. 
In this section, we introduce the structure-preserving Monte Carlo (SPMC) method to address this challenge, which decomposes a target PQC into an ensemble of Clifford circuits that can be efficiently simulated classically.  
Specifically, in SPMC, each non-Clifford channel in the target circuit is decomposed into a linear combination of Clifford circuit channels that preserve the architecture of the target PQC. In other words, all sampled Clifford circuits are composed of a similar sequence of Clifford gates, differing only in discrete Clifford-compatible rotation angles. 
This enables the faithful emulation of the corresponding noise behavior for these quantum circuits on superconducting hardware~\cite{krantzQuantumEngineersGuide2019,sackLargescaleQuantumApproximate2024}. Crucially, in the next section, we will see that this feature allows us to leverage noisy quantum hardware to substantially reduce the overall simulation cost.

\subsection{Decomposition of parameterized quantum circuits}

In this subsection, we present a decomposition of the quantum channel associated with a PQC into a linear combination of Clifford circuit channels.  
The objective is to construct such a decomposition with the $\ell_1$-norm of the coefficients kept as small as possible, since this quantity directly determines the sample complexity of the Monte Carlo estimator.

Consider a general parameterized quantum circuit channel defined in Eq.~\eqref{eq:PQC_model_channel}
\begin{equation}
  \mathcal{C}(\bm{\theta}) 
  = 
  \mathcal{U}_L(\theta_L) \circ \cdots \circ \mathcal{U}_1(\theta_1),
\end{equation}
where $U_l(\theta_l) := e^{-i \theta_l P_l / 2} C_l$. Since $C_l$ itself is Clifford, it suffices to decompose only the rotation part $e^{-i \theta_l P_l / 2}$ into a linear combination of Clifford rotations.

For angles $\theta \in [0,\pi/4]$, Ref.~\cite{benninkUnbiasedSimulationNearClifford2017} provides a decomposition of $\mathcal{R}_Z(\theta)$ into only three Clifford channels, and the resulting $\ell_1$-norm is provably minimal.  

\begin{lemma}[\cite{benninkUnbiasedSimulationNearClifford2017}]\label{lemma:pra}
For the channel $\mathcal{R}_{Z}(\theta)$ corresponding to the rotation gate $R_Z(\theta)$, one has the decomposition
\[
\small
\mathcal{R}_{Z}(\theta)
  = \frac{1+\cos\theta - \sin\theta}{2}\,\mathcal{I}
  + \frac{1-\cos\theta - \sin\theta}{2}\,\mathcal{Z}
  + \sin\theta\,\mathcal{S},
\]
where $\mathcal{I}, \mathcal{Z}, \mathcal{S}$ denote the channels associated with the gates $I$, $Z$, and $S$, respectively.  
For $\theta \in [0,\pi/4]$, this decomposition achieves the minimal possible $\ell_1$-norm among all exact decompositions.
\end{lemma}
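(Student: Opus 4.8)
The plan is to treat the decomposition identity and the optimality of its $\ell_1$‑norm separately. I read ``all exact decompositions'' as decompositions into (completely) stabilizer‑preserving channels, i.e.\ the quantity being minimized is the channel robustness $\mathcal{R}_*$, so a bound for this largest class subsumes subclasses such as Clifford‑unitary channels. The identity itself I would verify in the Heisenberg picture, checking that both sides act identically on the four single‑qubit Paulis $I,X,Y,Z$: the inputs are $R_Z(\theta)XR_Z(\theta)^\dagger=\cos\theta\,X+\sin\theta\,Y$, $R_Z(\theta)YR_Z(\theta)^\dagger=-\sin\theta\,X+\cos\theta\,Y$, $R_Z(\theta)ZR_Z(\theta)^\dagger=Z$, together with $SXS^\dagger=Y$, $SYS^\dagger=-X$, $SZS^\dagger=Z$ and $ZXZ=-X$, $ZYZ=-Y$; substituting these, the coefficient of each Pauli on the right‑hand side collapses to the corresponding rotation‑matrix entry, and $\tfrac{1+\cos\theta-\sin\theta}{2}+\tfrac{1-\cos\theta-\sin\theta}{2}+\sin\theta=1$ confirms that the combination is affine. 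This part is routine.

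For the $\ell_1$‑norm I would first evaluate it for the stated decomposition: on $\theta\in[0,\pi/4]$ one has $\cos\theta\ge\sin\theta\ge0$ and $\cos\theta+\sin\theta\ge1$, so the three coefficients have signs $+,-,+$ and the $\ell_1$‑norm equals $\tfrac{1+\cos\theta-\sin\theta}{2}+\tfrac{\cos\theta+\sin\theta-1}{2}+\sin\theta=\cos\theta+\sin\theta$. It then remains to prove the matching lower bound $\mathcal{R}_*(\mathcal{R}_Z(\theta))\ge\cos\theta+\sin\theta$.

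For the lower bound I would use a Choi‑state magic witness, which sidesteps any appeal to LP duality. Write $J_{\mathcal E}=(\mathcal E\otimes\mathcal I)(\rho_\Phi)$ for the normalized Choi matrix, with $\rho_\Phi=\tfrac14(I\!\otimes\!I+X\!\otimes\!X+Z\!\otimes\!Z-Y\!\otimes\!Y)$ the two‑qubit Bell state; by definition a stabilizer‑preserving $\mathcal S$ has $J_{\mathcal S}$ in the convex hull of two‑qubit stabilizer states. Take the Hermitian witness $H=\tfrac12(X\!\otimes\!X+X\!\otimes\!Y+Y\!\otimes\!X-Y\!\otimes\!Y)$. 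Using the rotation action above, $J_{\mathcal{R}_Z(\theta)}=\tfrac14(I\!\otimes\!I+\cos\theta\,X\!\otimes\!X+\sin\theta\,Y\!\otimes\!X-\cos\theta\,Y\!\otimes\!Y+\sin\theta\,X\!\otimes\!Y+Z\!\otimes\!Z)$, so $\tr{H\,J_{\mathcal{R}_Z(\theta)}}=\cos\theta+\sin\theta$ by inspection. If in addition $|\tr{H\sigma}|\le1$ for every two‑qubit stabilizer state $\sigma$, then for any decomposition $\mathcal{R}_Z(\theta)=\sum_k q_k\mathcal S_k$ with stabilizer‑preserving $\mathcal S_k$ and $\sum_k q_k=1$, tensoring with $\mathcal I$, evaluating on $\rho_\Phi$, and pairing with $H$ gives $\cos\theta+\sin\theta=\sum_k q_k\tr{H\,J_{\mathcal S_k}}\le\sum_k|q_k|$, which is the desired bound.

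The one genuine step---and the expected obstacle---is the witness bound $|\tr{H\sigma}|\le1$, equivalently $|\langle X\!\otimes\!X\rangle+\langle X\!\otimes\!Y\rangle+\langle Y\!\otimes\!X\rangle-\langle Y\!\otimes\!Y\rangle|\le2$ on pure stabilizer states; it reduces to a short finite check. Among the four operators $X\!\otimes\!X,X\!\otimes\!Y,Y\!\otimes\!X,Y\!\otimes\!Y$, the operator $X\!\otimes\!X$ anticommutes with $X\!\otimes\!Y$ and with $Y\!\otimes\!X$ but commutes with $Y\!\otimes\!Y$, and symmetrically $X\!\otimes\!Y$ anticommutes with $Y\!\otimes\!Y$; since a pure stabilizer state has vanishing expectation on any Pauli anticommuting with an element of its stabilizer group, at most two of the four expectation values can be nonzero, and those two must be the commuting pair $\{X\!\otimes\!X,Y\!\otimes\!Y\}$ or $\{X\!\otimes\!Y,Y\!\otimes\!X\}$ with values in $\{\pm1\}$; in every such case the signed combination has absolute value at most $2$, and the bound extends to all stabilizer states by convexity. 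Combining the upper and lower bounds gives $\mathcal{R}_*(\mathcal{R}_Z(\theta))=\cos\theta+\sin\theta$ on $[0,\pi/4]$ and hence the claimed optimality; the only nonroutine input is the choice of $H$.
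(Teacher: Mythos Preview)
Your proof is correct. The decomposition identity and the $\ell_1$-norm computation are routine as you say, and your witness argument for the lower bound goes through: the Choi state and $\tr{H\,J_{\mathcal{R}_Z(\theta)}}=\cos\theta+\sin\theta$ are computed correctly, and the anticommutation analysis does force the nonzero Pauli expectations on a pure stabilizer state to lie within one of the two commuting pairs $\{X\!\otimes\!X,\,Y\!\otimes\!Y\}$ or $\{X\!\otimes\!Y,\,Y\!\otimes\!X\}$, giving $|\tr{H\sigma}|\le1$.

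The paper does not prove this lemma directly (it is cited from Bennink et al.), but its Appendix~\ref{appendix:proof_SPMC} establishes the same optimality as a special case of $\mathcal{R}_*(\mathcal{R}_P(\theta))=|\sin\theta|+|\cos\theta|$ by a different route: it invokes the Sandwich Theorem $\mathcal{R}(\Phi_{\mathcal E})\le\mathcal{C}(\mathcal E)\le\mathcal{R}_*(\mathcal E)$, observes that the Choi state $\Phi_{\mathcal{R}_Z(\theta)}$ is Clifford-equivalent (via CNOT) to $R_Z(\theta)\ket{+}\!\bra{+}R_Z(\theta)^\dagger\otimes\ket{0}\!\bra{0}$, and then applies the known single-qubit robustness formula $\mathcal{R}(\rho)=\max\{1,|x|+|y|+|z|\}$ to read off $|\cos\theta|+|\sin\theta|$. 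In effect the paper offloads the witness argument to that single-qubit formula, whereas you build the two-qubit witness $H$ by hand and verify its bound on stabilizer states from scratch. Your approach is more self-contained and avoids importing the Sandwich Theorem and the Bloch-vector robustness formula as black boxes; the paper's approach is shorter and yields the stronger chain $\mathcal{R}(\Phi)=\mathcal{C}=\mathcal{R}_*$ in one stroke. Your witness $H$ is, up to the CNOT conjugation, exactly the pullback of the single-qubit witness $X+Y$ that underlies the Bloch-vector formula, so the two arguments are closely related under the hood.
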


We now present an alternative Clifford decomposition that applies to all $\theta \in [0,2\pi]$, and furthermore, it generalizes naturally to an arbitrary $n$-qubit Pauli rotation channel.  
Moreover, as discussed later in our treatment of noisy circuits, this decomposition is particularly advantageous when the rotation gate is followed by axis-aligned Pauli noise, since the combined noisy gate admits an even more favorable $\ell_1$-norm under our approach.

\begin{lemma}[Optimal Clifford decomposition of rotation gate]
\label{lemma:optimal_clifford_decomp}
For any single-parameter rotation gate $e^{-i\theta P / 2}$ generated by a Pauli operator 
$P \in \{I, X, Y, Z\}^{\otimes n}$, 
let $\mathcal{R}_P(\theta)$ denote the corresponding quantum channel representation.  
Then, the minimal $\ell_1$-norm of the Clifford decomposition coefficients satisfies
$$
\min_{a_i, \mathcal{C}_i} \left\{\sum_i |a_i|: \sum_i a_i \mathcal{C}_i = \mathcal{R}_P(\theta)\right\} = |\sin(\theta)| + |\cos(\theta)|, 
$$
where each $\mathcal{C}_i$ is a Clifford circuit channel. 
Furthermore, we have
$$
\mathcal{R}_P(\theta) = \sum_{k = 0}^3 a_k \mathcal{R}_P(k\pi/2),
$$
where 
\[
\begin{aligned}
  a_0 &= \tfrac{|\cos \theta|}{2(|\sin \theta| + |\cos \theta|)}  + \tfrac{\cos \theta}{2}, \quad
  a_2 = \tfrac{|\cos \theta|}{2(|\sin \theta| + |\cos \theta|)}  - \tfrac{\cos \theta}{2},\\[4pt]
  a_1 &= \tfrac{|\sin \theta|}{2(|\sin \theta| + |\cos \theta|)}  + \tfrac{\sin \theta}{2}, \quad
  a_3 = \tfrac{|\sin \theta|}{2(|\sin \theta| + |\cos \theta|)}  - \tfrac{\sin \theta}{2},
\end{aligned}
\]
and $\sum_{k=0}^3 |a_k| = |\sin(\theta)| + |\cos(\theta)|$.
\end{lemma}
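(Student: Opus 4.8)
The plan is to show that the claimed value is simultaneously an upper bound for the minimal $\ell_1$-norm, by verifying the explicit four-term decomposition given in the statement, and a lower bound, by exhibiting a witness that is bounded by $1$ on every Clifford channel while taking the value $|\cos\theta|+|\sin\theta|$ on $\mathcal R_P(\theta)$. Throughout I assume $P\neq I^{\otimes n}$; when $P=I^{\otimes n}$ the channel $\mathcal R_P(\theta)$ is the identity and the minimal $\ell_1$-norm is simply $1$, so that case is set aside.

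\emph{Upper bound.} First I would observe that each $\mathcal R_P(k\pi/2)$ is a Clifford channel, since $e^{-ik\pi P/4}=\cos(k\pi/4)\,I-i\sin(k\pi/4)\,P$ normalizes the Pauli group; hence the displayed expansion is genuinely a decomposition over Clifford channels. Next I would verify the operator identity $\mathcal R_P(\theta)=\sum_{k=0}^{3}a_k\,\mathcal R_P(k\pi/2)$ by evaluating both sides on a Pauli basis. On any Pauli commuting with $P$, every $\mathcal R_P(\phi)$ acts as the identity, so the identity reduces to $\sum_k a_k=1$, which is immediate from the stated formulas. On a Pauli $Q'$ anticommuting with $P$ one has $e^{-i\phi P/2}Q'e^{i\phi P/2}=\cos\phi\,Q'-i\sin\phi\,PQ'$, and since $-iPQ'$ is a Hermitian Pauli operator anticommuting with $Q'$, the restriction of $\mathcal R_P(\phi)$ to $\operatorname{span}\{Q',-iPQ'\}$ is the planar rotation by $\phi$; the identity then reduces to $a_0-a_2=\cos\theta$ and $a_1-a_3=\sin\theta$, both read off directly. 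Finally, setting $\lambda:=|\sin\theta|+|\cos\theta|\ge 1$, I would note $a_0a_2=\tfrac{\cos^2\theta}{4}\bigl(\lambda^{-2}-1\bigr)\le 0$ and likewise $a_1a_3\le 0$, so $a_0,a_2$ (resp.\ $a_1,a_3$) have opposite signs and $|a_0|+|a_2|=|a_0-a_2|=|\cos\theta|$, $|a_1|+|a_3|=|a_1-a_3|=|\sin\theta|$, giving $\sum_k|a_k|=\lambda$.

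\emph{Lower bound, and the main obstacle.} The harder direction is that \emph{every} decomposition $\mathcal R_P(\theta)=\sum_i a_i\mathcal C_i$ into Clifford channels obeys $\sum_i|a_i|\ge|\sin\theta|+|\cos\theta|$. The obstacle is that $\mathcal R_P(\theta)$ distributes a single Pauli over two orthogonal Pauli directions, so one needs a witness that sees the combined $\ell_1$-mass on both directions rather than either alone; a fixed, $\theta$-independent linear functional bounded by $1$ on Cliffords can only certify a bound of the form $|\alpha\cos\theta+\beta\sin\theta|$ with fixed $\alpha,\beta$, which is not $\theta$-optimal. The resolution I would use exploits that a Clifford channel acts on Paulis as a signed permutation. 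Fix a Pauli operator $Q$ with $\{P,Q\}=0$ (which exists because $P\neq I^{\otimes n}$) and set $R:=-iPQ$, which is Hermitian and equals $\pm$ an $n$-qubit Pauli operator, with $R\neq\pm Q$ and $\mathcal R_P(\theta)(Q)=\cos\theta\,Q+\sin\theta\,R$. Define the linear map $W$ sending a superoperator $\Phi$ to the pair of real coefficients along $Q$ and along $R$ in the Pauli expansion of $\Phi(Q)$, and equip $\mathbb R^2$ with the $\ell_1$-norm. Then $\|W(\mathcal R_P(\theta))\|_1=|\cos\theta|+|\sin\theta|$, while any Clifford channel $\mathcal C$ sends $Q$ to $\pm$ a single Pauli operator, so at most one coordinate of $W(\mathcal C)$ is nonzero and equals $\pm 1$, whence $\|W(\mathcal C)\|_1\le 1$. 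Linearity of $W$ together with the triangle inequality then gives
\[
|\cos\theta|+|\sin\theta|=\Bigl\|\sum_i a_i\,W(\mathcal C_i)\Bigr\|_1\le\sum_i|a_i|\,\|W(\mathcal C_i)\|_1\le\sum_i|a_i| ,
\]
which, combined with the upper bound, yields $\min_{a_i,\mathcal C_i}\sum_i|a_i|=|\sin\theta|+|\cos\theta|$. (Equivalently, one applies whichever of the two linear functionals $\Phi\mapsto(\text{coeff.\ along }Q)\pm(\text{coeff.\ along }R)$ in $\Phi(Q)$ is $\theta$-optimal, using $\|(u,v)\|_1=\max(|u+v|,|u-v|)$.) One small point I would pin down at the outset is the meaning of a Clifford circuit channel: I would read it as a unitary channel implemented by a Clifford circuit — exactly the class realized in the exhibited decomposition — so no broader family of stabilizer-preserving maps enters the argument.
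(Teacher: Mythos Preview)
Your proof is correct, and the upper-bound half matches the paper's treatment (the paper simply asserts the four-term identity ``can be directly verified,'' whereas you spell out the verification on commuting and anticommuting Paulis and the sign argument for $\sum_k|a_k|$). The lower bound, however, follows a genuinely different route. The paper works through resource-theoretic monotones: it passes to the Choi state $\Phi_{\mathcal{R}_{Z_1}(\theta)}$, invokes the sandwich inequality $\mathcal{R}(\Phi_{\mathcal{E}})\le\mathcal{C}(\mathcal{E})\le\mathcal{R}_*(\mathcal{E})$, reduces to the single-qubit Choi state via Clifford equivalence and tensoring with Bell pairs, and then reads off $\mathcal{R}(\Phi_{\mathcal{R}_Z(\theta)})=|\cos\theta|+|\sin\theta|$ from the known Bloch-vector formula for single-qubit robustness of magic. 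Your argument bypasses all of this by exhibiting a direct witness: the linear map $W$ extracting the $(Q,R)$-coordinates of $\Phi(Q)$ is bounded in $\ell_1$ by $1$ on every Clifford channel precisely because Cliffords act as signed permutations on Paulis. This is more elementary and fully self-contained, needing no external facts about stabilizer monotones. The paper's route, on the other hand, simultaneously establishes $\mathcal{R}_*(\mathcal{R}_P(\theta))=\mathcal{C}(\mathcal{R}_P(\theta))=|\sin\theta|+|\cos\theta|$ over the larger class of completely stabilizer-preserving maps, which the paper needs later (Eq.~\eqref{eq:SPMC_equal_RC}); your witness argument, as stated, applies only to unitary Clifford channels and would not directly control general CSP maps.
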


The above lemma establishes that the minimal $\ell_1$-norm of the Clifford decomposition coefficients for any $n$ qubit Pauli rotation channel $\mathcal{R}_P(\theta)$ is given by $|\sin(\theta)| + |\cos(\theta)|$. 
In other words, to obtain the most efficient linear decomposition, it is sufficient to represent $\mathcal{R}_P(\theta)$ using only the four Clifford-equivalent channels 
$\{\mathcal{R}_P(k\pi/2)\}_{k=0}^3$,
which achieves the smallest attainable $\ell_1$-norm among all possible Clifford decompositions, 
and hence reduces the Monte Carlo sampling overhead in the simulation. 
The proof of Lemma~\ref{lemma:optimal_clifford_decomp} is provided in Suppl.~Mat.~\ref{appendix:proof_SPMC}.

If each non-Clifford rotation gate in the circuit $\mathcal{C}(\bm\theta)$ is decomposed using Lemma~\ref{lemma:optimal_clifford_decomp}, 
then for gate $\mathcal{U}_l(\theta_l) = \mathcal{R}_{P_l}(\theta_l) \circ \mathcal{C}_l$ can be decomposed into $\mathcal{U}_l(k\pi/2) = \mathcal{R}_{P_l}(k\pi/2) \circ \mathcal{C}_l$ with the same coefficients $a_k, k = 0,1,2,3$ to decompose $\mathcal{R}_{P_l}(\theta_l)$, and the entire non-Clifford circuit can be expressed as a linear combination of Clifford circuits:
\begin{equation}\label{eq:Clifford_linear_non_Clifford}
  \begin{aligned}
  \mathcal{C}(\bm{\theta})
  &= 
  \mathcal{U}_L(\theta_L) \circ \cdots \circ \mathcal{U}_1(\theta_1) \\
  &= \left(\sum_{k_L = 0}^3 a_{k_L}^{(L)}\mathcal{U}_L(\tfrac{k_L\pi}{2})\right) \circ \cdots \left(\sum_{k_1 = 0}^3 a_{k_1}^{(1)}\mathcal{U}_1(\tfrac{k_1\pi}{2})\right)\\
  &= 
  \sum_{k_1, \ldots, k_L} 
  a^{(1)}_{k_1} \cdots a^{(L)}_{k_L} \,
  \mathcal{U}_L(\tfrac{k_L\pi}{2}) \circ \cdots \circ \mathcal{U}_1(\tfrac{k_1\pi}{2}).
  \end{aligned}
\end{equation}
All the Clifford circuits appearing in this expansion retain the exact same gate sequence and connectivity as the original variational circuit, 
differing only in the discrete rotation angles $\{k_l \pi/2\}$. 
The coefficients $a_{k_l}^{(l)}$ are defined in Lemma~\ref{lemma:optimal_clifford_decomp} for each layer $l$, 
and this decomposition achieves the minimal $\ell_1$-norm if each non-Clifford rotation gate is individually decomposed.

\subsection{Structure-preserving Monte Carlo sampling}\label{subsec:SPMC}

Based on the Clifford decomposition above, 
we next describe how the quasiprobability sampling is performed in the structure-preserving framework. 
Consider a parameterized circuit channel 
$\mathcal{C}(\bm{\theta}) = \mathcal{U}_L(\theta_L) \circ \cdots \circ \mathcal{U}_1(\theta_1)$. 
For the $l$-th layer, we denote the decomposition coefficients by $\{a_{k_l}^{(l)}\}$, with $\sum_{k_l}a_{k_l}^{(l)} \mathcal{U}_l(k_l \pi/2) = \mathcal{U}_l(\theta_l)$ and $\sum_{k_l}|a_{k_l}^{(l)}| = |\sin(\theta_l)| + |\cos(\theta_l)|$, same as in Eq.~\eqref{eq:Clifford_linear_non_Clifford}.

In the SPMC algorithm, the circuit is sampled layer by layer.  
At each layer $l$, a Clifford channel $\mathcal{U}_l(k_l \pi/2)$ is drawn according to the normalized quasiprobability
\begin{equation}
  p_{k_l}^{(l)} = \frac{|a_{k_l}^{(l)}|}{|\sin(\theta_l)| + |\cos(\theta_l)|},
  \qquad 
  s_{k_l}^{(l)} = \mathrm{sign}(a_{k_l}^{(l)}),
\end{equation}
and the sequence $\vec{k}=(k_1,\ldots,k_L)$ forms a sampled \emph{trajectory}.  
Each trajectory corresponds to a Clifford circuit that preserves the full architecture of the original variational circuit, differing only in the discrete rotation angles.  
The expectation value of an observable $O$ is then estimated by averaging over $M$ sampled trajectories as
\begin{widetext}
\begin{equation}
  \hat{O}
  =
  \frac{\prod_{l=1}^{L}(|\sin(\theta_l)| + |\cos(\theta_l)|)}{M}
  \sum_{i=1}^{M}
  \left(\prod_{l=1}^{L}s_{k_l^{(i)}}^{(l)}\right)
  \mathrm{Tr}\!\left[
    O\,\mathcal{U}_L(\tfrac{k_L^{(i)} \pi}{2})\!\circ\!\cdots\!\circ\!
    \mathcal{U}_1(\tfrac{k_1^{(i)} \pi}{2})(\rho)
  \right].
\end{equation}
\end{widetext}
This estimator is unbiased, and its variance scales as 
$\mathrm{Var}(\hat{O})\!\sim\!\prod_{l=1}^{L}\left(|\sin(\theta_l)| + |\cos(\theta_l)|\right)^2/M$, 
leading to an overall sample complexity of 
$\mathcal{O}\left(\prod_{l=1}^{L}\left(|\sin(\theta_l)| + |\cos(\theta_l)|\right)^2\right)$.

A key property of this decomposition is that, for every parameterized layer $\mathcal{U}_l(\theta_l)$,
\begin{equation}\label{eq:SPMC_equal_RC}
\mathcal{R}_*(\mathcal{U}_l(\theta_l))
=\mathcal{C}(\mathcal{U}_l(\theta_l))
=|\sin\theta_l|+|\cos\theta_l|,
\end{equation}
where $\mathcal{R}_*$ denotes the channel robustness and $\mathcal{C}$ the magic capacity.  
This equality was observed numerically in Ref.~\cite{seddonQuantifyingMagicMultiqubit2019} and is rigorously proven in Suppl.~Mat.~\ref{appendix:proof_SPMC}.  
Equation~\eqref{eq:SPMC_equal_RC} implies that the single-layer Clifford decomposition already attains the minimal $\ell_1$-norm representation for each parameterized gate.  
However, obtaining a CSP decomposition of the entire circuit 
$\mathcal{C}(\bm{\theta})$ that minimizes the total $\ell_1$-norm is generally 
infeasible.
Thus, in practical simulations, Static and Dynamic Monte Carlo methods also operate in a layer-wise manner and their resulting sample complexities coincide with that of SPMC.

\subsection{Classical simulation under noise}\label{subsec:noisySPMC}

In realistic quantum hardware, parameterized circuits $\mathcal{C}(\bm{\theta})$ are inevitably affected by noise.  
Recall that we model an imperfectly implemented gate as
\[
\tilde{\mathcal{U}}_l(\theta_l)
  = \mathcal{E}_l \circ \mathcal{U}_l(\theta_l),
\]
where $\mathcal{E}_l$ is a noise channel, i.e., a completely positive trace-preserving (CPTP) map.  
The resulting noisy circuit is denoted by $\tilde{\mathcal{C}}(\bm{\theta})$, following the notation introduced in Section~\ref{app:sec:circuit_model_and_noise_model}.  
For each parameterized rotation gate
$U_l(\theta_l)=e^{-i\theta_l P_l/2}C_l$, 
we consider a Pauli noise channel aligned with the rotation axis:
\begin{equation}\label{eq:noise_channel2}
  \mathcal{E}_{P_l}(\rho)
  = (1-\gamma_l)\rho + \gamma_l\,P_l \rho P_l^{\dagger},
\end{equation}
where $\gamma_l$ is the single-gate noise rate.  
When the expectation value of a PQC is viewed as a function of its parameters, such axis-aligned Pauli noise suppresses the high-frequency components of the function at an exponential rate with respect to the frequency~\cite{fontana2025classical,lu2026unifiedfrequencyprinciplequantum}.

In this subsection, we analyze the classical simulation cost of noisy circuits under the above noise model using the SPMC framework.  
In our approach, each noisy rotation gate $\tilde{\mathcal{U}}_l(\theta_l)$ is treated as a single composite channel and decomposed directly into a linear combination of Clifford circuit channels, after which quasiprobability Monte Carlo sampling is performed.  
This is in contrast to a layer-wise treatment in which the noiseless rotation $\mathcal{U}_l(\theta_l)$ and the noise channel $\mathcal{E}_l$ are handled separately.  
By combining them before decomposition, the resulting $\ell_1$-norm of the coefficients is typically smaller than in the noiseless case, reflecting the fact that noise reduces the non-stabilizerness of the underlying operation~\cite{triguerosNonstabilizernessErrorResilience2025,weiNoiseRobustnessThreshold2024}.
As a consequence, once the noise is sufficiently strong, the Monte Carlo simulation becomes easier: the variance of the estimator decreases and may even approach a constant.  
Such noise-introduced reductions in simulation cost have been observed numerically for rotation gates under amplitude-damping noise~\cite{seddonQuantifyingMagicMultiqubit2019}, as well as for CCX-type gates subject to depolarizing noise in qudit systems~\cite{wangQuantifyingMagicQuantum2019}.  
Here, we characterize this effect for parameterized quantum circuits with axis-aligned Pauli noise and subsequently extend the analysis to general Pauli noise channels.

We consider a noisy rotation gate channel $\mathcal{E}_{P_l} \circ \mathcal{R}_l(\theta_l)$, where $\mathcal{R}_l(\theta_l)$ denotes the unitary rotation channel generated by $P_l$.  
For the noise model in Eq.~\eqref{eq:noise_channel2}, the composite channel simplifies to a mixture of two rotation gate channels:
\begin{equation}
  \mathcal{E}_{P_l} \circ \mathcal{R}_l(\theta_l)
  =
  (1-\gamma_l)\,\mathcal{R}_l(\theta_l)
  + \gamma_l\,\mathcal{R}_l(\theta_l+\pi).
\end{equation}
This channel can be further decomposed into a linear combination of the four Clifford circuit channels
$\{\mathcal{R}_l(k\pi/2)\}_{k=0}^3$, 
with coefficients
\[
\begin{aligned}
  a_0 &= \tfrac{|\cos \theta_l|}{2(|\sin\theta_l| + |\cos\theta_l|)}
        + \tfrac{(1-2\gamma_l)\cos\theta_l}{2},\\
  a_2 &= \tfrac{|\cos \theta_l|}{2(|\sin\theta_l| + |\cos\theta_l|)}
        - \tfrac{(1-2\gamma_l)\cos\theta_l}{2},\\[2pt]
  a_1 &= \tfrac{|\sin\theta_l|}{2(|\sin\theta_l| + |\cos\theta_l|)}
        + \tfrac{(1-2\gamma_l)\sin\theta_l}{2},\\
  a_3 &= \tfrac{|\sin\theta_l|}{2(|\sin\theta_l| + |\cos\theta_l|)}
        - \tfrac{(1-2\gamma_l)\sin\theta_l}{2}.
\end{aligned}
\]
The $\ell_1$-norm of this decomposition is
\begin{equation}
  \sum_{k=0}^3 |a_k|
  = 
  \max\!\left[
    1,\;
    (1-2\gamma_l)\big(|\sin\theta_l| + |\cos\theta_l|\big)
  \right].
\end{equation}
Thus, when
\begin{equation}\label{eq:leq_SPMC}
  (1-2\gamma_l)\big(|\sin\theta_l| + |\cos\theta_l|\big) \le 1,
\end{equation}
the noisy rotation gate becomes a convex combination of Clifford channels.

Consequently, for a noisy circuit $\tilde{\mathcal{C}}(\bm{\theta})$ under axis-aligned Pauli noise, the variance of the SPMC estimator for estimating the noisy expectation $\langle O_{\mathrm{noisy}}\rangle = \tr{O\,\tilde{\mathcal{C}}(\bm{\theta})(\rho)}$ scales as
\begin{equation}
  \mathrm{Var}(\hat{O})
  \sim
  \frac{
    \prod_{l=1}^L 
    \left[
      \max\!\left(1,(1-2\gamma_l)(|\sin\theta_l| + |\cos\theta_l|)\right)
    \right]^2
  }{
    M
  },
\end{equation}
where $M$ is the number of Monte Carlo samples.  
The corresponding sample complexity is therefore
\begin{equation}\label{eq:complexity_noise}
  \mathcal{O}\!\left(
    \prod_{l=1}^L 
      \max\!\left[
        1,\,
        (1-2\gamma_l)(|\sin\theta_l| + |\cos\theta_l|)
      \right]^2
    \right).
\end{equation}
This expression shows that increasing noise monotonically suppresses the non-Clifford contribution to the simulation cost.  
Beyond the critical noise level
\[
\gamma_{c,l}
  = \tfrac{1}{2}\!\left(
      1 - \frac{1}{|\sin\theta_l| + |\cos\theta_l|}
    \right), l = 1,2,\cdots, L, 
\]
each noisy rotation becomes a probabilistic mixture of Clifford operations, and the entire circuit becomes efficiently simulable.

Finally, when the noise model includes additional Pauli components beyond the axis-aligned term in Eq.~\eqref{eq:noise_channel2}—for example, the general form in Eq.~\eqref{qem2}—the SPMC estimator remains efficiently simulable as long as the axis-aligned component satisfies the condition~\eqref{eq:leq_SPMC}.

In the next section, we demonstrate how noisy quantum hardware can be used to further enhance the efficiency of our SPMC.

\section{Leveraging noisy quantum devices for classical simulation}\label{sec:noisy_device_protocol}

\subsection{Motivating observations}

Classical simulation of non-Clifford quantum circuits becomes challenging when the $\ell_1$-norm of the Clifford decomposition grows rapidly with circuit depth. However, it turns out that this pessimistic picture can be substantially improved. Before elaborating on how to achieve this, let us present three relevant observations with each revealing a distinct mechanism to alleviate the hardness of simulating a quantum circuit.

\subsubsection*{Observation 1: Global decompositions and hardware-assisted coefficient learning}

Many quantum circuits that appear complex at the gate level may admit surprisingly 
simple global Clifford decompositions.  
A prominent example is the multi-controlled-$Z$ gate $C^{n-1}Z$.  
Although any Clifford+$T$ implementation of $C^{n-1}Z$ necessarily uses 
$\Omega(n)$ non-Clifford $T$ gates~\cite{Beverland_2020,jiangLowerBoundCount2023,gossetMultiqubitToffoliExponentially2025}—
suggesting an exponentially growing cost if simulated layer by layer via Monte Carlo—
recent work has shown that its global magic capacity is bounded by $\mathcal{C}(C^{n-1}Z)<5$~\cite{weiNoiseRobustnessThreshold2024}.   
Consequently, the optimal Clifford decomposition of the entire gate incurs 
only a constant simulation cost, independent of $n$.  
This illustrates that a circuit containing many non-Clifford components 
can still be classically tractable once an appropriate global decomposition is identified.

A natural question then arises:  
\emph{How can one find such low-cost decompositions in practice?}  
For parameterized rotations, the decomposition coefficients 
\[
\mathcal{R}_Z(\theta)=\sum_{k=0}^3 a_k \mathcal{R}_Z(k\pi/2)
\]
are analytically known, but for deep multi-gate circuits such coefficients are typically inaccessible.

Here, noisy quantum devices offer a useful clue.  
Consider a realistic device where each $R_Z(\theta)$ gate is followed by a 
dephasing channel that is independent of~$\theta$, as is standard for superconducting qubits due to virtual-$Z$ implementations.  
Even though the exact noise rate is unknown, one can experimentally obtain coefficients 
$\{b_k\}$ satisfying
\[
\tilde{\mathcal{R}}_Z(\theta)=\sum_{k=0}^3 b_k\,\tilde{\mathcal{R}}_Z(k\pi/2),
\]
by measuring expectation values on hardware.  
Because the noise channel is the same on both sides, applying its inverse 
recovers the noiseless decomposition:
\[
\mathcal{R}_Z(\theta)=\sum_{k=0}^3 b_k\,\mathcal{R}_Z(k\pi/2).
\]
Thus, the noisy device effectively learns a valid decomposition for the noiseless channel.
This illustrates that intrinsic hardware noise, rather than being purely detrimental, 
can provide valuable information for identifying efficient global decompositions.

\subsubsection*{Observation 2: Observable-level equivalence reduces simulation complexity}

The previous observation assumed that the Clifford decomposition must match the target 
circuit at the full channel level.  
Yet this is unnecessarily restrictive: experiments only access expectation values 
for specific input states and observables. 
If we relax the requirement and enforce equality only at the level of expectation values,
\[
\tr{O\,\mathcal{C}(\bm{\theta})(\rho)}
 = \sum_k a_k\,\tr{O\,\mathcal{C}_k(\rho)},
\]
the required number of Clifford terms and the associated $\ell_1$-norm 
can be substantially reduced.  

A simple example is the circuit $\mathcal{R}_Z(\theta)\circ \mathcal{H}$ 
with input $|0\rangle$ and observable $X$.  
Channel-level decomposition in Lemma~\ref{lemma:optimal_clifford_decomp} requires four Clifford terms and leads to a cost of $(|\sin\theta|+|\cos\theta|)^2$.  
However, matching only the observable value allows the single-term choice 
$a_0=\cos\theta$, 
leaving only a single nonzero term.
This demonstrates that relaxing to observable-level equivalence can dramatically 
improve the efficiency of classical simulation.

\subsubsection*{Observation 3: Restoring transferability via Pauli insertions}

\begin{figure}[htbp]
    \centering

    \begin{subfigure}{0.45\textwidth}
        \centering
        \begin{quantikz}
            \lstick{$\ket{0}$} 
                & \gate{R_X(\theta)} 
                & \gate{R_Z(\phi)} 
                & \meter{$X+Z$}
        \end{quantikz}
        \caption{}
        \label{fig:circ1}
    \end{subfigure}
    \hfill

    \begin{subfigure}{0.45\textwidth}
        \centering
        \begin{quantikz}
            \lstick{$\ket{0}$} 
                & \gate{R_X(\theta)} 
                & \gate{X} 
                & \gate{R_Z(\phi)} 
                & \gate{Z} 
                & \meter{$X + Z$}
        \end{quantikz}
        \caption{}
        \label{fig:circ2}
    \end{subfigure}

    \caption{A single-qubit PQC with observable $X + Z$ used in Observation 3.}
    \label{fig:two_circuits}
\end{figure}
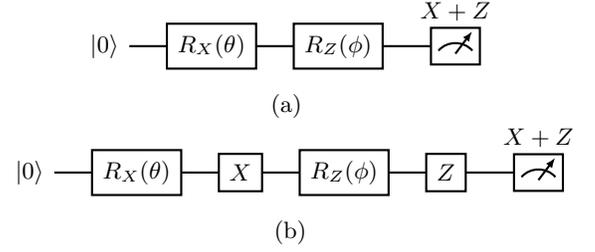

Observable-level decompositions introduce a subtle challenge:  
coefficients learned from noisy hardware may fail to reproduce the correct 
expectation values of the corresponding noiseless circuit—even when the noise is Pauli and independent of the rotation angles.  
The following example illustrates this issue and a simple remedy.

Consider the circuit in Fig.~\ref{fig:circ1},  
where $R_X(\theta)$ and $R_Z(\phi)$ are followed by Pauli noise channels  
$\mathcal{E}_X(\rho)=(1-\gamma_1)\rho+\gamma_1 X\rho X$ and  
$\mathcal{E}_Z(\rho)=(1-\gamma_2)\rho+\gamma_2 Z\rho Z$, respectively.  
For input $\ket{0}$ and observable $O=X+Z$, the noisy expectation values are
\begin{equation}
\begin{aligned}
    \langle X_{\mathrm{noisy}}\rangle &= (1-2\gamma_1)(1-2\gamma_2)\sin\theta\,\sin\phi,\\
    \langle Z_{\mathrm{noisy}}\rangle &= (1-2\gamma_1)\cos\theta,\\
    \langle O_{\mathrm{noisy}}\rangle &= (1-2\gamma_1)(1-2\gamma_2)\sin\theta\,\sin\phi
    +(1-2\gamma_1)\cos\theta.
\end{aligned}
\end{equation}
In the ideal limit $\gamma_1=\gamma_2=0$, this reduces to 
$\sin\theta\sin\phi+\cos\theta$.

Following the decomposition in Lemma~\ref{lemma:optimal_clifford_decomp}, the noiseless observable satisfies
\begin{equation}\label{eq:rx_rz_decomp_expectation_v2_short}
\small
\begin{aligned}
    & \tr{(X+Z)\,\mathcal{R}_Z(\phi)\circ\mathcal{R}_X(\theta)(\rho)} \\
    =& \sum_{k,l=0}^{3} a_{k,l}\;
    \tr{(X+Z)\,\mathcal{R}_Z(l\pi/2)\circ\mathcal{R}_X(k\pi/2)(\rho)}.
\end{aligned}
\end{equation}
However, the noisy-device relation becomes
\begin{equation}\label{eq:rx_rz_decomp_noisy_expectation_v2_short}
\small
\begin{aligned}
    & \tr{(X+Z)\,\mathcal{E}_Z\circ\mathcal{R}_Z(\phi)\circ
    \mathcal{E}_X\circ\mathcal{R}_X(\theta)(\rho)} \\
    = & \sum_{k,l=0}^{3} a_{k,l}\;
    \tr{(X+Z)\,\mathcal{E}_Z\circ\mathcal{R}_Z(l\pi/2)\circ
    \mathcal{E}_X\circ\mathcal{R}_X(k\pi/2)(\rho)}.
\end{aligned}
\end{equation}
A direct check shows that Eqs.~\eqref{eq:rx_rz_decomp_expectation_v2_short}  
and~\eqref{eq:rx_rz_decomp_noisy_expectation_v2_short} are generally not equivalent; 
for example, choosing $a_{0,0}=(1-2\gamma_2)\sin\theta\sin\phi+\cos\theta$ and all other $a_{k,l}=0$ 
exactly satisfies the noisy relation~\eqref{eq:rx_rz_decomp_noisy_expectation_v2_short} 
but violates the noiseless identity~\eqref{eq:rx_rz_decomp_expectation_v2_short}, 
so coefficients that fit the noisy data need not reproduce the noiseless expectation value.

To restore transferability, we insert Pauli gates after each rotation,  
as in Fig.~\ref{fig:circ2}.  Writing $\mathcal{X}(\rho)=X\rho X$ and 
$\mathcal{Z}(\rho)=Z\rho Z$, the modified noisy observable becomes
\begin{equation}
\begin{aligned}
    \langle X_{\mathrm{noisy}}\rangle &= (1-2\gamma_1)(1-2\gamma_2)\sin\theta\,\sin\phi,\\
    \langle Z_{\mathrm{noisy}}\rangle &= -(1-2\gamma_1)\cos\theta,\\
    \langle O_{\mathrm{noisy}}\rangle &= (1-2\gamma_1)(1-2\gamma_2)\sin\theta\,\sin\phi
    -(1-2\gamma_1)\cos\theta.
\end{aligned}
\end{equation}
The corresponding noisy decomposition reads
\begin{equation}\label{eq:rx_rz_decomp_noisy_expectation_pauli_v2_short}
\small
\begin{aligned}
    &\tr{(X+Z)\,\mathcal{E}_Z\!\circ\!\mathcal{Z}\!\circ\!\mathcal{R}_Z(\phi)\!
        \circ\!\mathcal{E}_X\!\circ\!\mathcal{X}\!\circ\!\mathcal{R}_X(\theta)(\rho)}\\
    =&\sum_{k,l=0}^{3} a_{k,l}\;
      \tr{(X+Z)\,\mathcal{E}_Z\!\circ\!\mathcal{Z}\!\circ\!\mathcal{R}_Z(l\pi/2)\!
      \circ\!\mathcal{E}_X\!\circ\!\mathcal{X}\!\circ\!\mathcal{R}_X(k\pi/2)(\rho)}.
\end{aligned}
\end{equation}
Crucially, if $\{a_{k,l}\}$ satisfy both
Eq.~\eqref{eq:rx_rz_decomp_noisy_expectation_pauli_v2_short} and Eq.~\eqref{eq:rx_rz_decomp_noisy_expectation_v2_short},  
they also satisfy the noiseless identity Eq.~\eqref{eq:rx_rz_decomp_expectation_v2_short}.  
Thus Pauli insertions recover the equivalence between noisy and noiseless decompositions,  
ensuring that coefficients learned on hardware remain valid in the noiseless setting.

This observation highlights that when the Clifford decomposition is constructed only at the observable level, 
the coefficients obtained from a noisy device may not generalize to the noiseless regime.  
By inserting appropriate Pauli gates after each rotation, 
we can re-establish this correspondence and ensure the learned coefficients remain valid.

\subsection{Noisy-device-enhanced classical simulation protocol (NDE-CS)}

\begin{figure*}[htb!]
 \centering
 \includegraphics[width = 1.48\columnwidth]{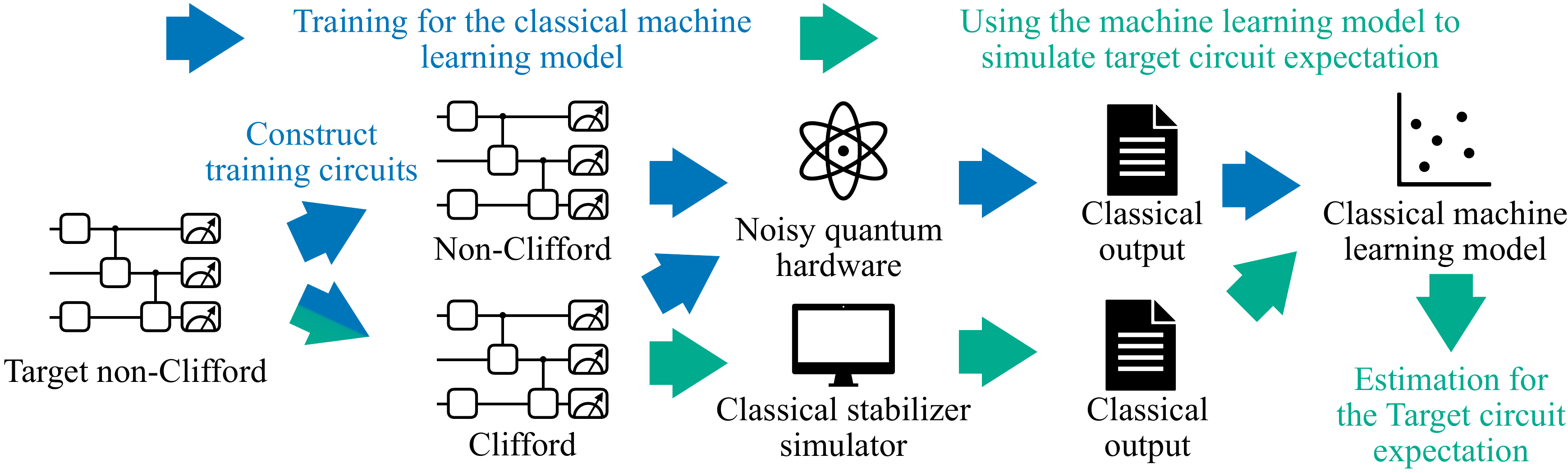}
  \caption{\justifying
Framework for the NDE-CS protocol.
The objective is to estimate the noiseless expectation value of a given observable for a target non-Clifford circuit.
The blue arrows indicate the training stage: we construct pairs consisting of the target non-Clifford circuit and associated Clifford circuits.
Both types of circuits are run on noisy quantum hardware to obtain measurement data, forming a training dataset for a classical machine-learning model.
The model learns a mapping from Clifford-circuit expectations to the corresponding expectations of the target non-Clifford circuit from noisy measurement data.
The green arrows denote the inference stage:
the trained model takes as input the classically simulated expectations of the Clifford circuits (obtained efficiently via a stabilizer simulator) and outputs a prediction for the target non-Clifford expectation value.
Through this framework, noisy quantum devices enhance classical simulation by providing the data needed to learn an effective mapping from Clifford-circuit expectations to target-circuit expectations.
}
 \label{fig:protocol_framework}
\end{figure*}

Building upon these observations, we now present a general protocol that leverages noisy quantum devices to enhance the classical simulation of non-Clifford quantum circuits, which we call noisy-device-enhanced classical simulation (NDE-CS).
Recall that any non-Clifford circuit can be expressed as a linear combination of Clifford circuits.  
However, both the number of resulting Clifford terms and the $\ell_1$-norm of the combination coefficients 
scale exponentially with the number of non-Clifford gates, 
making direct classical simulation intractable.  
The central challenge, therefore, is to identify a sparse Clifford decomposition—
one that contains only a small number of nonzero terms—while retaining accuracy.  
Our protocol addresses this challenge by combining classical preselection with data collected from noisy quantum devices.  
The procedure consists of two parts:  
(1) identifying which Clifford circuits should be retained,  
and (2) estimating the corresponding coefficients of the chosen linear combination.  
An overview of the protocol is illustrated in Fig.~\ref{fig:protocol_framework}.

\subsubsection{Part I: Selecting the relevant Clifford circuits}

Our goal is to estimate the expectation value of a target non-Clifford circuit
\[
\mathcal{U}(\bm{\theta})
= \mathcal{U}_L(\theta_L)\circ \cdots \circ \mathcal{U}_1(\theta_1)
\]
with respect to a stabilizer input state $\rho$ and an observable $O$ that is a polynomial-size linear combination of Pauli operators,  
i.e., $\tr{O\,\mathcal{U}(\bm{\theta})(\rho)}$.  
Each gate takes the form
$U_l(\theta_l) = e^{-i P_l \theta_l/2}C_l$, 
where $C_l$ is a Clifford gate and $P_l \in \{I, X, Y, Z\}^{\otimes n}$ is a Pauli operator.

From Eq.~\eqref{eq:Clifford_linear_non_Clifford}, we can write
\begin{equation}\label{eq:target_decomp0}
  \mathcal{U}(\bm{\theta})
  = \sum_{k_1,\dots,k_L} a_{k_1,\dots,k_L}\,
  \mathcal{U}_{L,k_L}\circ\cdots\circ\mathcal{U}_{1,k_1},
\end{equation}
where $\mathcal{U}_{l,k_l}=\mathcal{U}_l(k_l\pi/2)$ and $k_l\in\{0,1,2,3\}$.  
The coefficients factorize as $a_{k_1,\dots,k_L}=\prod_{l=1}^L a_{k_l}^{(l)}$,  
with the optimal single-gate coefficients $a_k^{(l)}$ given in Lemma~\ref{lemma:optimal_clifford_decomp}.  

Since Eq.~\eqref{eq:target_decomp0} involves exponentially many terms, 
we truncate the expansion by retaining only Clifford circuits associated with 
large-magnitude coefficients.  
Let $\mathcal{I}$ denote the index of the retained subset of Clifford circuits.  
We then approximate
\begin{equation}\label{eq:target_decomp_truncated}
  \mathcal{U}(\bm{\theta})
  \approx
  \sum_{(k_1,\dots,k_L)\in\mathcal{I}}
  b_{k_1,\dots,k_L}\,
  \mathcal{U}_{L,k_L}\circ\cdots\circ\mathcal{U}_{1,k_1}.
\end{equation}

Building on Observation~2,  
we further relax the requirement that the decomposition hold at the channel level.  
It suffices for the equality to hold at the observable level:
\begin{equation}\label{eq:target_decomp_observable}
  \begin{aligned}
  & \tr{O\,\mathcal{U}(\bm{\theta})(\rho)} \\
  \approx &
  \sum_{(k_1,\dots,k_L)\in\mathcal{I}}
  b_{k_1,\dots,k_L}\,
  \tr{O\,\mathcal{U}_{L,k_L}\circ\cdots\circ\mathcal{U}_{1,k_1}(\rho)}.
  \end{aligned}
\end{equation}
In this setting, certain Clifford circuits may yield zero contribution to the chosen $(\rho,O)$ pair,  
i.e., $\tr{O\,\mathcal{U}_{L,k_L}\circ\cdots\circ\mathcal{U}_{1,k_1}(\rho)}=0$.  
These can be safely discarded to further reduce the number of nonzero terms 
and thus the overall classical simulation cost.  
Let $\mathcal{J}$ denote the remaining index set of nonzero terms,  
then our effective approximation becomes
\begin{equation}\label{eq:target_decomp_observable2}
  \begin{aligned}
  & \tr{O\,\mathcal{U}(\bm{\theta})(\rho)} \\
  \approx &
  \sum_{(k_1,\dots,k_L)\in\mathcal{J}}
  b_{k_1,\dots,k_L}\,
  \tr{O\,\mathcal{U}_{L,k_L}\circ\cdots\circ\mathcal{U}_{1,k_1}(\rho)}.
  \end{aligned}
\end{equation}
In the next part, we describe how to determine the coefficients $\{b_{k_1,\dots,k_L}\}$ 
using a noisy quantum device.

\subsubsection{Part II: Estimating the coefficients using a noisy quantum device}

Having determined the candidate Clifford circuits (indexed by $\mathcal{J}$), 
the remaining task is to compute the coefficients 
$b_{k_1,\dots,k_L}$ satisfying Eq.~\eqref{eq:target_decomp_observable2}, which we estimate using data collected from a noisy quantum device.

We assume that each gate $\mathcal{U}_l(\theta_l)$ on hardware is followed by an unknown Pauli noise channel defined in Eq.~\eqref{qem2}, i.e., $ \mathcal{E}_l(\rho) = \bigcirc_{k_l \in \mathcal{K}}
  \left(w_{k_l} \cdot + (1-w_{k_l}) P_{k_l} \cdot P_{k_l}^{\dagger}\right) (\rho)$,
where the noise distribution $\{w_{k_l}\}$ is independent of the rotation angle $\theta_l$~\cite{krantzQuantumEngineersGuide2019,sackLargescaleQuantumApproximate2024}.
The noisy circuit channel is then
\[
\tilde{\mathcal{U}}(\bm{\theta})
= \tilde{\mathcal{U}}_L(\theta_L)\circ\cdots\circ\tilde{\mathcal{U}}_1(\theta_1),
\quad
\tilde{\mathcal{U}}_l(\theta_l)=\mathcal{E}_l\circ\mathcal{U}_l(\theta_l).
\]
Naively fitting Eq.~\eqref{eq:target_decomp_observable2} on noisy data does not guarantee that the resulting coefficients remain valid for the noiseless circuit, 
as demonstrated in Observation~3.  
To address this issue, we employ ``Pauli insertions'': 
after each gate $\mathcal{U}_l(\theta_l)$ we insert a randomly selected Pauli gate $\mathcal{P}_l\in\{I,X,Y,Z\}^{\otimes n}$, 
yielding the modified circuit
\begin{equation}\label{eq:insert_Pauli_circuit_number}
  \mathcal{U}(\bm{\theta},\bm{P})
  = \mathcal{P}_L\!\circ\!\mathcal{U}_L(\theta_L)\!\circ\!\cdots\!\circ\!
    \mathcal{P}_1\!\circ\!\mathcal{U}_1(\theta_1),
\end{equation}
and its noisy realization
\[
\tilde{\mathcal{U}}(\bm{\theta},\bm{P})
= \mathcal{P}_L\!\circ\!\mathcal{E}_L\!\circ\!\mathcal{U}_L(\theta_L)
\!\circ\!\cdots\!\circ\!
\mathcal{P}_1\!\circ\!\mathcal{E}_1\!\circ\!\mathcal{U}_1(\theta_1).
\]
We then search for coefficients $\{b_{k_1,\dots,k_L}\}$ satisfying
\begin{equation}\label{eq:target_decomp_noisy_expectation_pauli}
    \small
\begin{aligned}
    &\tr{O\,\tilde{\mathcal{U}}(\bm{\theta},\bm{P})(\rho)}\\
    =& \sum_{(k_1,\dots,k_L)\in\mathcal{J}}
      b_{k_1,\dots,k_L}\,
      \tr{O\,\mathcal{P}_L\!\circ\!
      \tilde{\mathcal{U}}_{L,k_L}\!\circ\!\cdots\!\circ\!
      \mathcal{P}_1\!\circ\!\tilde{\mathcal{U}}_{1,k_1}(\rho)}.
  \end{aligned}
\end{equation}
By repeatedly sampling different random insertion patterns $\bm{P}$, 
we obtain a system of linear equations involving noisy circuit expectations, 
which can be efficiently measured on hardware.  
Solving this system on a classical computer yields 
a set of coefficients $\{b_{k_1,\dots,k_L}\}$ 
that are theoretically guaranteed to remain valid in the noiseless limit, as shown in Section~\ref{sec:theoretical_analysis}.

Finally, substituting the obtained coefficients into Eq.~\eqref{eq:target_decomp_observable2} 
and evaluating the Clifford circuit expectations classically 
(e.g., via stabilizer simulation) provides an estimate of the target noiseless expectation value:
\begin{equation}
  \small
  \langle O\rangle^{(M_\mathrm{C}, M_\mathrm{P})}
  = \sum_{(k_1,\dots,k_L)\in\mathcal{J}}
    b_{k_1,\dots,k_L}\,
    \tr{O\,\mathcal{U}_{L,k_L}\circ\cdots\circ\mathcal{U}_{1,k_1}(\rho)}.
\end{equation}
Here, $M_\mathrm{C}$ denotes the number of retained Clifford circuits 
$\{\mathcal{U}_{L,k_L}\circ\cdots\circ\mathcal{U}_{1,k_1}\}_{(k_1,\ldots,k_L)\in\mathcal{J}}$, 
and $M_\mathrm{P}$ denotes the number of distinct ``Pauli insertions'', 
also referred to as Pauli insertion patterns.

\subsubsection{Algorithmic summary}

Algorithm~\ref{ALGORITHM_HS} outlines the complete workflow.  
The protocol requires $(M_\mathrm{C}+1)\times M_\mathrm{P}$ noisy circuit evaluations on hardware 
and $M_\mathrm{C}$ classical Clifford simulations.  
All noisy evaluations are performed directly on the physical device,  
and no additional error mitigation or device-specific calibration is required.

\begin{algorithm}[H]
  \caption{Noisy-device-enhanced classical simulation protocol}
  \label{ALGORITHM_HS}
  \begin{algorithmic}[1]
    \State \textbf{Input:} Stabilizer input $\rho$, observable $O$, circuit $\mathcal{C}(\bm{\theta}) = \mathcal{U}_{L}(\theta_L)\circ\cdots\circ\mathcal{U}_{1}(\theta_1)$, the number of sampled Clifford circuits $M_\mathrm{C}$, and number of Pauli insertion patterns $M_\mathrm{P}$.
    \State \textbf{Output:} Approximate expectation $\langle O\rangle^{(M_\mathrm{C},M_\mathrm{P})} \approx \tr{O\,\mathcal{C}(\bm{\theta})(\rho)}$.
    \State Construct the index set $\mathcal{J}$ of size $M_\mathrm{C}$, where each element $\bm{k}=(k_1,\ldots,k_L)$ specifies one Clifford configuration 
    $\mathcal{U}_{\bm{k}}=\mathcal{U}_{L}(\tfrac{k_L\pi}2)\circ\cdots\circ\mathcal{U}_{1}(\tfrac{k_1\pi}2)$
     with $k_l\in\{0,1,2,3\}$.  
          The configurations are sampled with probability proportional to $|a_k^{(l)}|$ (Eq.~\eqref{eq:target_decomp0}) and those with zero observable contribution are discarded.
    \For{$idx=1$ to $M_\mathrm{P}$}
      \State Randomly select a Pauli insertion pattern $\bm{P}^{(idx)}=\{P_1^{(idx)},\dots,P_L^{(idx)}\}$, where each $P_l^{(idx)}\in\{I,X,Y,Z\}^{\otimes n}$.
      \State On the noisy hardware, measure a batch of $(M_{\mathrm{C}} + 1)$ expectation values:  
      (i) the target noisy circuit $\tr{O\,\tilde{\mathcal{U}}(\bm{\theta},\bm{P}^{(idx)})(\rho)}$, and  
      (ii) for each $\bm{k}\in\mathcal{J}$, the basis noisy circuits  
      $\tr{O\,\mathcal{P}_L^{(idx)}\!\circ\!\tilde{\mathcal{U}}_{L,k_L}\!\circ\!\cdots\!\circ\!\mathcal{P}_1^{(idx)}\!\circ\!\tilde{\mathcal{U}}_{1,k_1}(\rho)}$.
      \State These $M_\mathrm{C}+1$ quantities form one linear equation as in Eq.~\eqref{eq:target_decomp_noisy_expectation_pauli}.  
            Repeating over $M_\mathrm{P}$ random insertion patterns yields a solvable system for $\{b_{\bm{k}}\}$.
    \EndFor
    \State Solve the resulting linear system classically for $\{b_{\bm{k}}\}_{\bm{k}\in\mathcal{J}}$.
    \State Evaluate each Clifford expectation $\tr{O\,\mathcal{U}_{\bm{k}}(\rho)}$ on a stabilizer simulator.
    \State Compute $\langle O\rangle^{(M_\mathrm{C},M_\mathrm{P})}=\sum_{\bm{k}\in\mathcal{J}} b_{\bm{k}}\,\tr{O\,\mathcal{U}_{\bm{k}}(\rho)}$.
    \State \Return $\langle O\rangle^{(M_\mathrm{C},M_\mathrm{P})}$.
\end{algorithmic}
\end{algorithm}

The algorithm error is quantified by the absolute error $\varepsilon_\mathrm{abs}$ and relative error $\varepsilon_\mathrm{rel}$, which are defined as follows:
\begin{equation}\label{eq:define_error_eq}
    \begin{aligned}
        & \varepsilon_\mathrm{abs} = \big|\langle O\rangle^{(M_\mathrm{C},M_\mathrm{P})} - \tr{O\,\mathcal{C}(\bm{\theta})(\rho)}\big|, \\ 
        & \varepsilon_\mathrm{rel} = \frac{\big|\langle O\rangle^{(M_\mathrm{C},M_\mathrm{P})} - \tr{O\,\mathcal{C}(\bm{\theta})(\rho)}\big|}
{\big|\tr{O\,\mathcal{C}(\bm{\theta})(\rho)}\big|}.
    \end{aligned}
\end{equation}

\subsection{Theoretical analysis}\label{sec:theoretical_analysis}

In this subsection, we provide theoretical guarantees for the correctness of the proposed protocol, i.e.,  the coefficients $\{b_{k_1,k_2,\ldots,k_L}\}$ learned from noisy circuits with Pauli insertions 
are valid for reconstructing the noiseless circuit expectation values.  
The key result is summarized in the following theorem.  

\begin{theorem}[Effect of Pauli insertions]
\label{thm:Pauli_insertion_theory}
Suppose that each noise channel $\mathcal{E}_l$ in the circuit is a Pauli channel, 
and that its inverse $\mathcal{E}_l^{-1}$ can also be expressed as a linear combination of Pauli channels.  
If there exist coefficients $\{b_{k_1,k_2,\ldots,k_L}\}$ such that, 
for every choice of Pauli insertions $\bm{P}=(\mathcal{P}_1,\ldots,\mathcal{P}_L)$ 
with $\mathcal{P}_l\in\{I,X,Y,Z\}^{\otimes n}$,  
the corresponding noisy circuit
\begin{equation}
  \tilde{\mathcal{U}}(\bm{\theta},\bm{P})
  = \mathcal{P}_L \circ \mathcal{E}_L \circ  \mathcal{U}_L(\theta_L)
  \circ \cdots \circ
  \mathcal{P}_1 \circ\mathcal{E}_1 \circ  \mathcal{U}_1(\theta_1)
\end{equation}
satisfies
\begin{equation}\label{eq:theorem_pauli_insert_condition}
    \small
\begin{aligned}
  & \tr{O\,\tilde{\mathcal{U}}(\bm{\theta},\bm{P})(\rho)} \\
  =&  \sum_{(k_1,\ldots,k_L)\in\mathcal{J}}
  b_{k_1,\ldots,k_L}\,
  \tr{O\,\mathcal{P}_L \circ \tilde{\mathcal{U}}_{L,k_L}
  \circ \cdots \circ
  \mathcal{P}_1 \circ \tilde{\mathcal{U}}_{1,k_1}(\rho)},
  \end{aligned}
\end{equation}
then these coefficients also satisfy the noiseless decomposition relation
\begin{equation}\label{eq:target_decomp_observable3}
    \begin{aligned}
  & \tr{O\,\mathcal{C}(\bm{\theta})(\rho)}\\
  = & \sum_{(k_1,\ldots,k_L)\in\mathcal{J}}
  b_{k_1,\ldots,k_L}\,
  \tr{O\,\mathcal{U}_{L,k_L}\circ\cdots\circ\mathcal{U}_{1,k_1}(\rho)}.
\end{aligned}
\end{equation}
\end{theorem}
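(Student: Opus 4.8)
The plan is to exploit the fact that the Pauli-insertion condition \eqref{eq:theorem_pauli_insert_condition} must hold \emph{for all} choices of $\bm{P}$, and then extract from this family of scalar identities a single operator-level (or at least twirled-operator-level) identity, which upon inverting the Pauli noise channels collapses to the noiseless relation \eqref{eq:target_decomp_observable3}. The crucial structural observation is that averaging \eqref{eq:theorem_pauli_insert_condition} over a suitably chosen ensemble of insertion patterns $\bm{P}$ implements a layer-by-layer Pauli twirl, and a Pauli channel commutes with its own twirl (indeed is fixed by it), so the noisy data already encodes the Pauli-twirled version of each side — which for Pauli noise is the same as the untwirled version.

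First I would set up notation: for a fixed layer $l$ and fixed Pauli $\mathcal{P}_l$, write $\mathcal{P}_l\circ\mathcal{E}_l = \mathcal{E}_l\circ\mathcal{P}_l$ when $\mathcal{E}_l$ is Pauli (Pauli channels commute with conjugation by Paulis up to signs that cancel in a channel), so that the noisy inserted circuit factors as $\tilde{\mathcal{U}}(\bm\theta,\bm P)=\mathcal{P}_L\circ\mathcal{E}_L\circ\mathcal{U}_L\circ\cdots$ can be rewritten by sliding each $\mathcal{P}_l$ to a canonical position. Second, I would take the difference of the two sides of \eqref{eq:theorem_pauli_insert_condition}: define $\Delta(\bm P):=\mathrm{Tr}[O\,\tilde{\mathcal U}(\bm\theta,\bm P)(\rho)]-\sum_{\bm k\in\mathcal J}b_{\bm k}\,\mathrm{Tr}[O\,\mathcal P_L\circ\tilde{\mathcal U}_{L,k_L}\circ\cdots(\rho)]$, which vanishes for every $\bm P$ by hypothesis. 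Expanding $\Delta(\bm P)$ in the Pauli basis and viewing it as a function on $(\{I,X,Y,Z\}^{\otimes n})^{\times L}$, the condition ``$\Delta\equiv 0$'' is equivalent to the vanishing of all its ``Fourier/Pauli-weight'' components; selecting the component that corresponds to summing $\mathcal P_l$ uniformly over all $4^n$ Paulis at every layer realizes the full product of single-layer Pauli twirls $\mathcal{T}_L\circ\cdots\circ\mathcal{T}_1$ applied to the underlying channel identity.

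Third, and this is where the hypothesis on $\mathcal E_l^{-1}$ enters, I would argue that after this twirl the noisy identity reads $\mathrm{Tr}[O\,(\mathcal E_L\circ\mathcal V_L\circ\cdots\circ\mathcal E_1\circ\mathcal V_1)(\rho)] = \sum_{\bm k} b_{\bm k}\,\mathrm{Tr}[O\,(\mathcal E_L\circ\mathcal V_{L,k_L}\circ\cdots)(\rho)]$ for the twirled gates $\mathcal V_l$, where crucially the \emph{same} Pauli channels $\mathcal E_l$ appear on both sides and in the same positions (this is exactly the phenomenon isolated in Observation~1 and the fix in Observation~3 — the insertions force the noise to be ``on both sides''). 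Because each $\mathcal E_l$ is invertible with $\mathcal E_l^{-1}$ a linear combination of Pauli channels, it commutes with the Clifford conjugations in the surrounding layers in a controlled way, so I can peel off $\mathcal E_L^{-1},\dots,\mathcal E_1^{-1}$ one at a time from the outermost layer inward — each cancellation is legitimate because it is applied identically to both sides of a true linear identity. What remains after stripping all $L$ noise channels is precisely the noiseless relation \eqref{eq:target_decomp_observable3}, since the twirl acts trivially on the expectation value $\mathrm{Tr}[O\,\mathcal C(\bm\theta)(\rho)]$ once the noise is removed (the twirl only rearranges Pauli labels that are summed over and ultimately reconstitute the original gates).

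The main obstacle I anticipate is bookkeeping the commutation of the Pauli insertions $\mathcal P_l$ through the intervening Clifford gates $C_{l+1},\dots$ and the noise channels: an inserted $\mathcal P_l$ does not stay put but gets conjugated into another Pauli $C_{l+1}\mathcal P_l C_{l+1}^\dagger$ as it moves, so ``summing uniformly over $\mathcal P_l$'' at the physical insertion point is the same as summing uniformly over its image, and one must check that this is still the \emph{full} uniform Pauli average (true, since conjugation by a Clifford permutes $\mathcal P_n$ bijectively) and that it lands in the right place to twirl the right noise channel. Making this sliding precise — and verifying that the collection of scalar equations $\{\Delta(\bm P)=0\}$ genuinely determines the twirled operator identity rather than merely its $O$-and-$\rho$ contraction — is the technical heart; one clean way is to note that since \eqref{eq:theorem_pauli_insert_condition} holds for the \emph{given} fixed $(\rho,O)$ and \emph{all} $\bm P$, and since the inserted Paulis already form an informationally complete-enough set under the relevant contractions, the twirled channel identity holds as an identity of the scalar $\mathrm{Tr}[O\,(\cdot)(\rho)]$, which is all that \eqref{eq:target_decomp_observable3} asserts. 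I would therefore organize the proof as: (i) sliding lemma for Pauli insertions through Cliffords; (ii) twirl extraction from the all-$\bm P$ hypothesis; (iii) layer-by-layer noise inversion using that $\mathcal E_l^{-1}$ is a Pauli-channel combination; (iv) collapse to \eqref{eq:target_decomp_observable3}.
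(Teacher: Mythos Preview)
Your plan has the right high-level idea---take linear combinations over the insertion patterns $\bm P$---but the concrete execution in steps (ii) and (iii) does not go through, and the paper's argument is both simpler and more direct than what you outline.

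The first gap is in your ``twirl extraction'' step. Averaging the one-sided insertions $\mathcal P_l\circ(\cdot)$ uniformly over all $4^n$ Paulis does \emph{not} implement a Pauli twirl $\mathcal T_l$; it implements the fully depolarizing channel, since $\tfrac{1}{4^n}\sum_P P\sigma P^\dagger=\tfrac{I}{2^n}\Tr{\sigma}$. A twirl requires conjugation, $\mathcal P_l\circ(\cdot)\circ\mathcal P_l^{-1}$, whereas here the Pauli sits only on the left of $\mathcal E_l\circ\mathcal U_l$. So your uniform average collapses both sides of \eqref{eq:theorem_pauli_insert_condition} to $\Tr{O}/2^n$, which carries no information. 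The second gap is in the ``peel off $\mathcal E_l^{-1}$'' step: the hypothesis gives you a family of \emph{scalar} identities indexed by $\bm P$, for the fixed pair $(\rho,O)$, and you cannot literally apply a superoperator like $\mathcal E_L^{-1}$ to a scalar; nor is there an operator-level identity to strip it from. Relatedly, the sliding lemma you worry about (commuting $\mathcal P_l$ through subsequent Cliffords) is never needed.

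What you are missing is simply the \emph{right} linear combination. The paper's proof inserts the identity $\mathcal E_l^{-1}\circ\mathcal E_l$ in front of each $\mathcal U_l$ in the noiseless target expression, then expands each $\mathcal E_l^{-1}=\sum_{r_l} p_{l,r_l}\,\mathcal P_{l,r_l}$ into Pauli channels. This produces exactly a sum of noisy-with-insertion expectations $\tr{O\,\mathcal P_{L,r_L}\!\circ\!\tilde{\mathcal U}_L\!\circ\!\cdots\!\circ\!\mathcal P_{1,r_1}\!\circ\!\tilde{\mathcal U}_1(\rho)}$ weighted by $\prod_l p_{l,r_l}$. Each such term is precisely the left-hand side of the hypothesis \eqref{eq:theorem_pauli_insert_condition} for a specific insertion pattern, so one substitutes the right-hand side termwise, then re-sums the $r_l$'s to reconstitute $\mathcal E_l^{-1}$ on the Clifford side and cancel it against $\mathcal E_l$. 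No twirling, no commuting through Cliffords, no operator-level identity is required; the whole argument is three lines. In short, replace your uniform (depolarizing) average by the weighted combination with coefficients $\prod_l p_{l,r_l}$ coming from the Pauli expansion of $\mathcal E_l^{-1}$, and steps (ii)--(iv) collapse into a single direct substitution.
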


Theorem~\ref{thm:Pauli_insertion_theory} establishes that, under an angle-independent Pauli noise model 
as defined in Section~\ref{app:sec:circuit_model_and_noise_model},  
the coefficients $\{b_{k_1,\ldots,k_L}\}$ learned from noisy circuits converge to those yielding the correct noiseless expectation values, 
provided that the training dataset is sufficiently large.  
This result guarantees the theoretical soundness of employing noisy quantum hardware 
to obtain the decomposition coefficients used in our protocol.
The proof of Theorem~\ref{thm:Pauli_insertion_theory} is provided in Suppl.~Mat.~\ref{appendix:prfoofthm1}.

\section{Numerical simulations}\label{sec:numerical_result}

In this section, we numerically evaluate the performance of the NDE-CS protocol on non-Clifford quantum circuits.
We first assess NDE-CS using second-order Trotterized time-evolution circuits of the Ising model, which provide a physically motivated and scalable testbed.
Within this circuit model, we then compare NDE-CS with SMC and DMC.
Finally, we construct a structured family of circuits to demonstrate a regime in which NDE-CS exhibits a pronounced advantage over SPD, a powerful classical simulation approach that has been shown to successfully simulate quantum circuits at previously inaccessible system sizes.

To numerically simulate noisy quantum hardware outputs, we employ multiple exact simulation methods depending on the qubit number and the circuit type (Clifford or non-Clifford)~\cite{xu2024mindspore, gidney2021stim, 10.1063/5.0269149}.

\subsection{NDE-CS performance on Trotterized circuits}

\begin{figure}[htb!]
    \centering
    \begin{subfigure}{0.9\columnwidth}
        \centering
        \includegraphics[width=\linewidth]{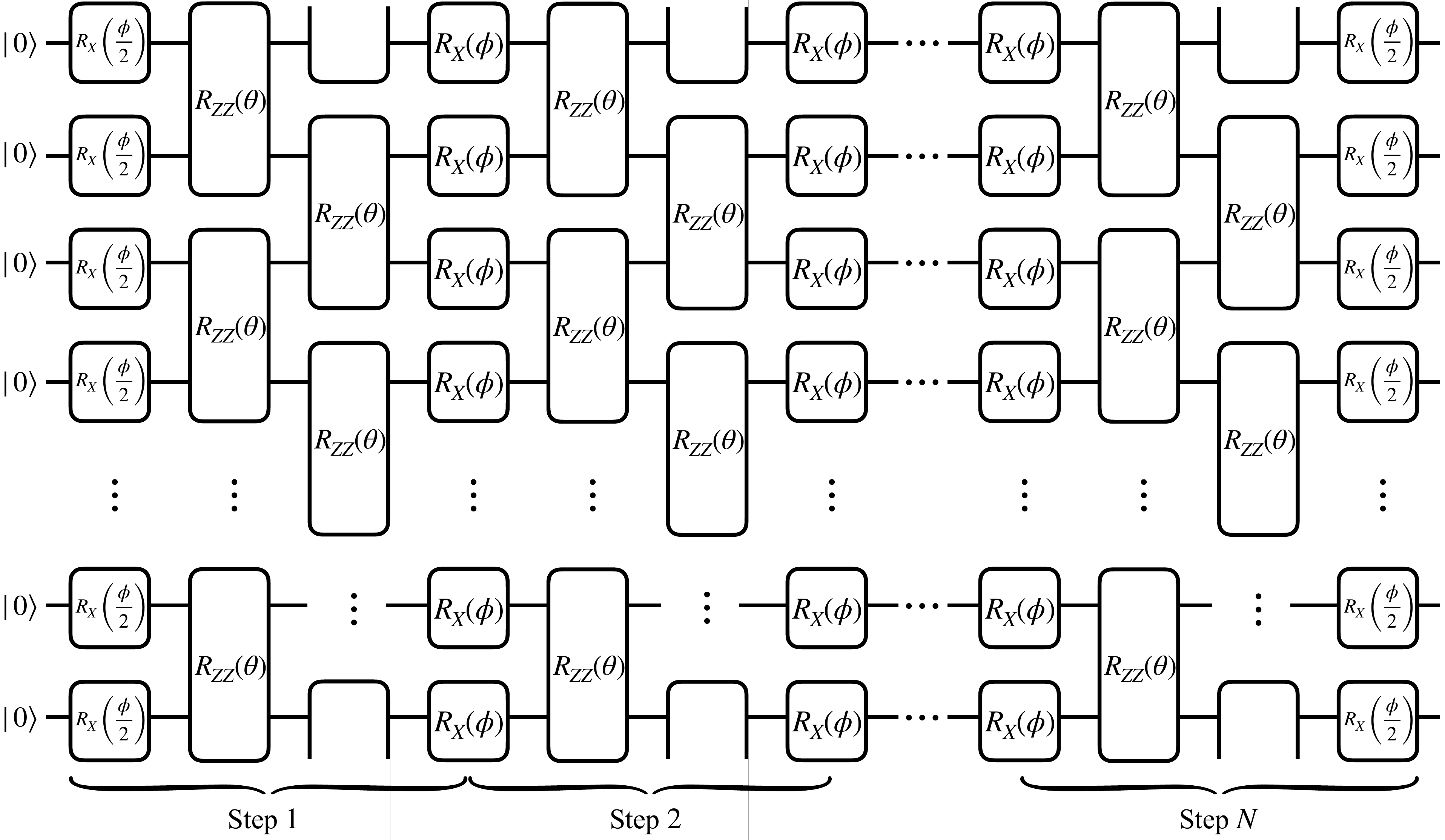}
        \caption{}
        \label{fig:ising_circuit}
    \end{subfigure}
    \vspace{0.5em}
    \begin{subfigure}{0.9\columnwidth}
        \centering
        \includegraphics[width=\linewidth]{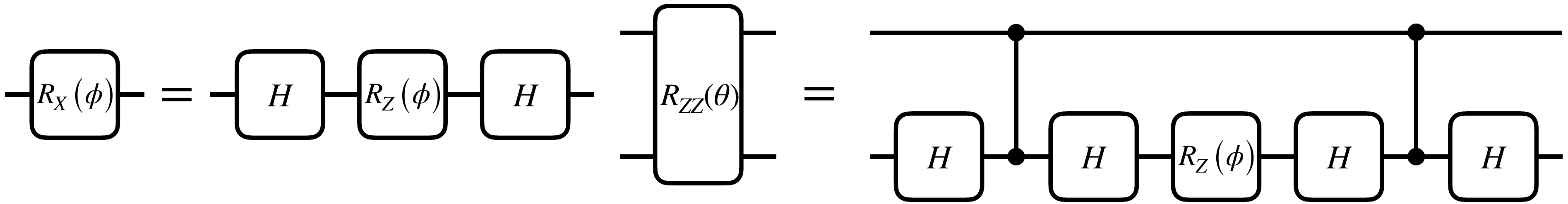}
        \caption{}
        \label{fig:ising_noise}
    \end{subfigure}
    \caption{\justifying
    Circuit structure for numerical simulations of the second-order Ising Hamiltonian.
    (a) The number of qubits is $n$, and the Trotter step number is $N$,
    with $\theta = -\frac{2J_{ij}T}{N}$, $\phi = \frac{2Th_i}{N}$, 
    and the observable $M_Z=\sum_i Z_i$.  
    (b) To mimic realistic superconducting hardware, the rotation gates in (a) 
    are compiled into the native basis consisting of $\{R_Z, \mathrm{C}Z, H\}$ gates.
    }
    \label{fig:numerical_circuit_structure}
\end{figure}

We use second-order Trotterized time-evolution circuits of the Ising model to evaluate the performance of the NDE-CS protocol.

\subsubsection{Circuit model}

The Ising Hamiltonian reads
\begin{equation}\label{target_H}
  H = -\sum_{\langle i,j\rangle\in E}J_{ij}Z_iZ_j + \sum_i h_iX_i,
\end{equation}
where $\langle i,j\rangle$ denotes a nearest-neighbor pair, 
$J_{ij}>0$ is the coupling strength, 
and $h_i$ represents the transverse field.  
The time evolution is discretized into $N$ Trotter steps, 
and the second-order Trotterized propagator is expressed as
\begin{equation}\label{eq:time_evolution}
  \begin{aligned}
    \ket{\psi(T)}
    \approx & 
    \prod_{k=1}^{N}\!\left(
      \prod_i R_{X_i}\!\left(\frac{2Th_i}{2N}\right)
      \prod_{\langle i,j\rangle\in E} R_{Z_iZ_j}\!\left(-\frac{2J_{ij}T}{N}\right) \right. \\
      & \left.
      \prod_i R_{X_i}\!\left(\frac{2Th_i}{2N}\right)
    \right)\ket{0}^{\otimes n}.
  \end{aligned}
\end{equation}
For simplicity, we set $E=\{\langle i,i+1\rangle:i=1,\dots,n\}$ with periodic boundary conditions, 
so that the circuit topology forms a ring, as shown in Fig.~\ref{fig:numerical_circuit_structure}(a).  
Each rotation gate is further compiled into the native superconducting gate set $\{R_Z,\mathrm{C}Z,H\}$, 
as illustrated in Fig.~\ref{fig:numerical_circuit_structure}(b).  

In all the simulations, we fix $J_{ij}=1$, $h_i=-1$, total evolution time $T=1$, 
and use the magnetization operator $M_Z=\sum_i Z_i$ as the observable.
Different circuit sizes ($n$ qubits) and Trotter steps ($N$) are explored 
to test the scalability and accuracy of NDE-CS.  

\subsubsection{Numerical results of NDE-CS}\label{subsec:NHCS_results}

In the noisy simulations, we consider a hardware-inspired Pauli noise model parameterized by $\gamma_X$, $\gamma_Y$, $\gamma_Z$, and $\gamma_{ZZ}$.  
Here, $\gamma_{ZZ}$ denotes the strength of the two-qubit correlated $ZZ$ noise acting after each two-qubit gate, 
while $\gamma_X$, $\gamma_Y$, and $\gamma_Z$ represent the single-qubit Pauli error rates associated with $X$, $Y$, and $Z$ errors, respectively, 
which occur independently on both qubits following every two-qubit gate.  
We set $\gamma_{ZZ}=10^{-3}$ and $\gamma_X=\gamma_Y=\gamma_Z=2\times10^{-3}$,
with enhanced dephasing strength on two-qubit interactions~\cite{vandenbergProbabilisticErrorCancellation2023,kimEvidenceUtilityQuantum2023a}.
This configuration enables us to faithfully emulate decoherence and gate imperfections typically present in superconducting quantum processors. 
Each circuit evaluation uses $N_{\mathrm{shot}}=2^{14}\approx1.6\times10^4$ measurement shots.

The accuracy of NDE-CS is characterized by both the relative and absolute errors 
(defined in Eq.~\eqref{eq:define_error_eq}).  
To obtain statistically stable estimates, the protocol is repeated 20 times, 
and the mean absolute error (MAE) of both the relative and the absolute errors 
is reported as the final metric.

\begin{figure*}[htb!]
    \centering
    \begin{subfigure}{0.7\columnwidth}
        \centering
        \includegraphics[width=\linewidth]{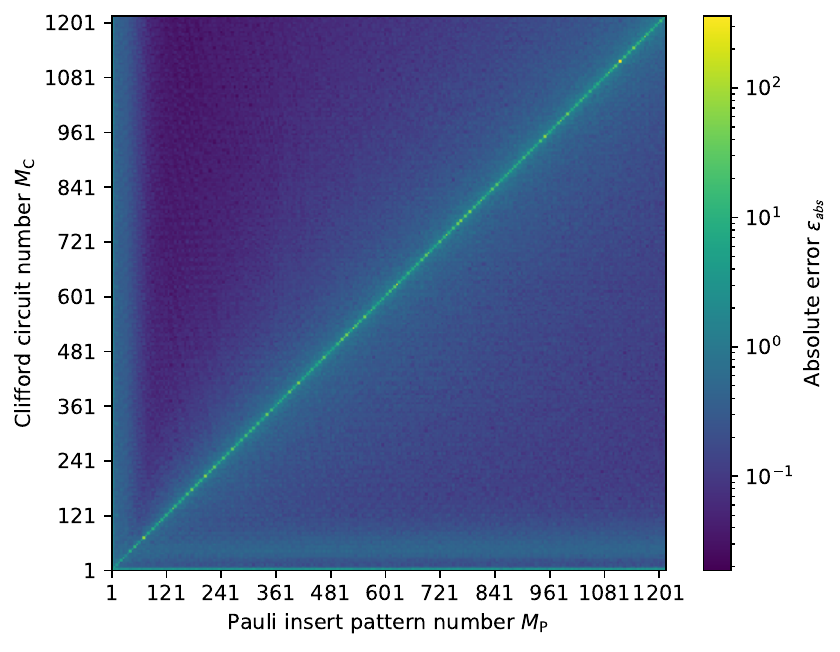}
        \caption{}
        \label{fig:numerical_result_16qubits_abs_error}
    \end{subfigure}
    \vspace{0.5em}
    \begin{subfigure}{0.7\columnwidth}
        \centering
        \includegraphics[width=\linewidth]{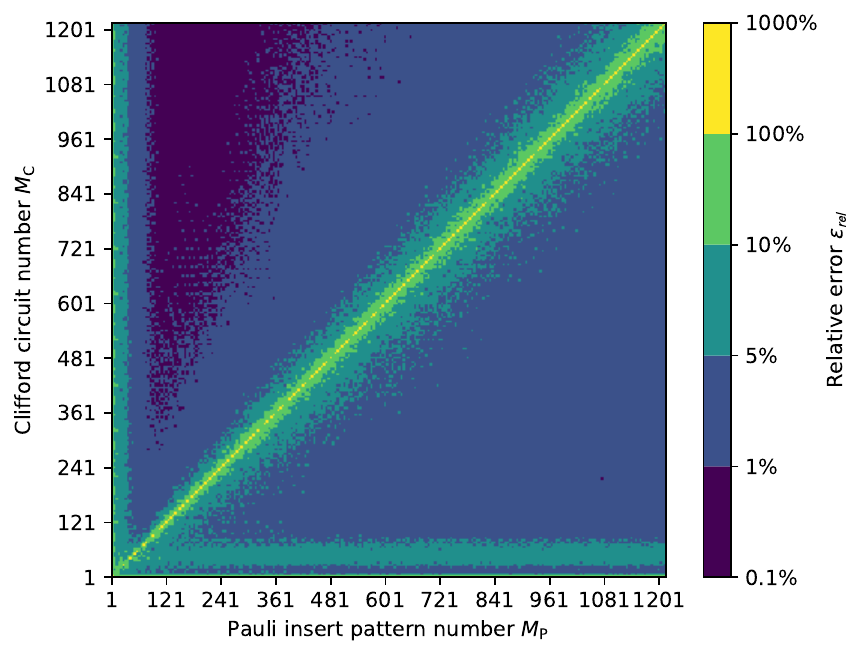}
        \caption{}
        \label{fig:numerical_result_16qubits_relative_error}
    \end{subfigure}
    \begin{subfigure}{0.5\columnwidth}
        \centering
        \includegraphics[width=\linewidth]{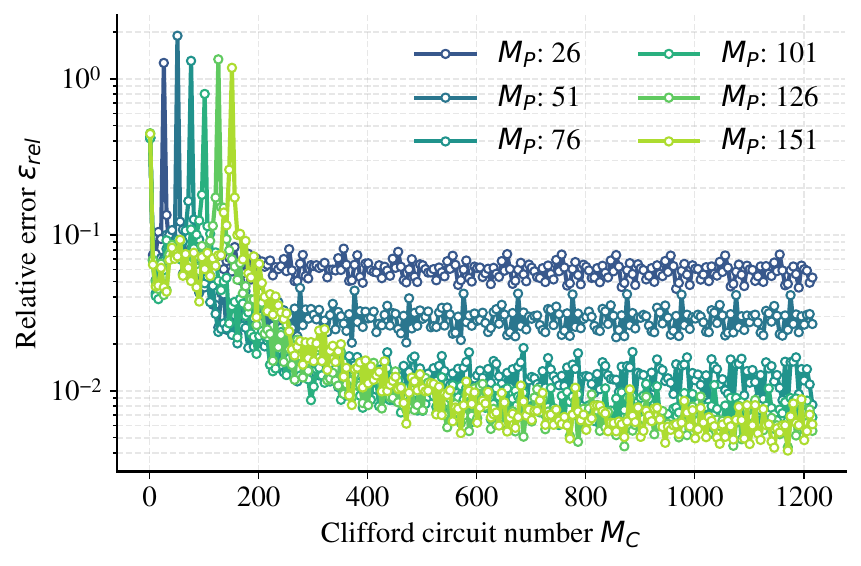}
        \caption{}
        \label{fig:numerical_result_16qubits_multiline_plot}
    \end{subfigure}
\caption{\justifying
Errors of the NDE-CS protocol for the $n=16$, $N=5$ Trotter-step Ising circuit $\mathcal{C}(\bm{\theta})_{16,5}$ 
under different numbers of sampled Clifford circuits $M_\mathrm{C}$ and Pauli insertion patterns $M_\mathrm{P}$.  
(a)~Absolute error 
$\varepsilon_{\mathrm{abs}}
=|\langle O\rangle^{(M_\mathrm{C},M_\mathrm{P})}
-\tr{O\,\mathcal{C}(\bm{\theta})(\rho)}|$.  
(b)~Relative error 
$\varepsilon_{\mathrm{rel}}
=|\langle O\rangle^{(M_\mathrm{C},M_\mathrm{P})}
-\tr{O\,\mathcal{C}(\bm{\theta})(\rho)}|
/|\tr{O\,\mathcal{C}(\bm{\theta})(\rho)}|$.  
The horizontal and vertical axes in (a) and (b) represent $M_\mathrm{P}$ and $M_\mathrm{C}$, respectively,  
and each data point corresponds to the mean of 20 independent NDE-CS runs.  
Dark-blue regions in (b) correspond to $\varepsilon_{\mathrm{rel}}\leq 1\%$, 
showing that such $(M_\mathrm{C},M_\mathrm{P})$ configurations are sufficient to achieve sub-percent accuracy in estimating the noiseless expectation. 
(c)~Line plots of the relative error $\varepsilon_{\mathrm{rel}}$ as a function of $M_\mathrm{C}$ for several fixed $M_\mathrm{P}$ values.  
Each curve exhibits monotonic convergence: as both $M_\mathrm{C}$ and $M_\mathrm{P}$ increase, the relative error systematically decreases and eventually stabilizes below the $1\%$ threshold.
}
    \label{fig:numerical_result_16qubits}
\end{figure*}

Fig.~\ref{fig:numerical_result_16qubits} illustrates the numerical performance of the NDE-CS protocol 
for a 16-qubit, 5-step Trotterized Ising circuit $\mathcal{C}(\bm{\theta})_{16,5}$.  
In Fig.~\ref{fig:numerical_result_16qubits}(a), each point in the color map represents 
the absolute error $\varepsilon_{\mathrm{abs}}$ obtained under a specific combination of 
the number of sampled Clifford circuits $M_\mathrm{C}$ (vertical axis) 
and the number of Pauli insertion patterns $M_\mathrm{P}$ (horizontal axis).  
The corresponding relative error $\varepsilon_{\mathrm{rel}}$ is shown in Fig.~\ref{fig:numerical_result_16qubits}(b).  
Both error metrics exhibit a clear convergence trend as $M_\mathrm{C}$ and $M_\mathrm{P}$ increase, 
particularly when $M_\mathrm{C}$ exceeds $M_\mathrm{P}$, 
indicating that the accuracy of NDE-CS systematically improves with richer circuit sampling.  
Notably, when $M_\mathrm{C}\!\approx\!720$ and $M_\mathrm{P}\!\approx\!120$, 
the relative error $\varepsilon_{\mathrm{rel}}$ drops below $1\%$, 
corresponding to a total of $M_\mathrm{tot}=(M_\mathrm{C}+1)M_\mathrm{P}\!\approx\!8.6\times10^4$ noisy circuit evaluations, 
together with the expectation values of $M_\mathrm{C}\!\approx\!720$ Clifford circuits that are computed classically.  
These results demonstrate that NDE-CS can accurately reconstruct noiseless observable values 
using only a moderate number of noisy quantum measurements, 
highlighting its efficiency and practical applicability on near-term hardware.

To further illustrate the convergence behavior,  
Fig.~\ref{fig:numerical_result_16qubits}(c) plots the relative error $\varepsilon_{\mathrm{rel}}$ as a function of $M_\mathrm{C}$ 
for several fixed values of $M_\mathrm{P}$.  
Each line plot shows a consistent trend of monotonic improvement:  
for small $M_\mathrm{P}$, the error plateaus early due to insufficient Pauli insertions,  
while larger $M_\mathrm{P}$ values yield faster and smoother convergence with respect to $M_\mathrm{C}$.  
When both $M_\mathrm{C}$ and $M_\mathrm{P}$ reach sufficiently large values,  
the relative error stabilizes below the $1\%$ threshold,  
confirming the asymptotic convergence of NDE-CS and its robustness against statistical fluctuations in circuit sampling.

\subsection{Comparisons with Static and Dynamic Monte Carlo}

We compare our NDE-CS protocol with Static and Dynamic Monte Carlo methods.
For the 16-qubit, 5-step Ising circuit $\mathcal{C}(\bm{\theta})_{16,5}$,
the circuit contains many layers of non-Clifford rotations.
To simulate the expectation value of the observable $M_Z$, Static and Dynamic Monte Carlo must perform a layer-wise decomposition: each parameterized gate 
$\mathcal{U}_l(\theta_l)$, $l=1,\dots,L$, is decomposed independently.
From Eq.~\eqref{eq:SPMC_equal_RC},
\begin{equation}
\mathcal{R}_*(\mathcal{U}_l(\theta_l))
= \mathcal{C}(\mathcal{U}_l(\theta_l))
= |\sin\theta_l| + |\cos\theta_l|.
\end{equation}
The sample complexities of Static and Dynamic Monte Carlo are both
\begin{equation}\label{eq:compare_sample_SPMC}
  \mathcal{O}\!\left(
  \prod_{l=1}^{L}\!\big(|\sin(\theta_l)| + |\cos(\theta_l)|\big)^2
  \varepsilon^{-2}
  \right),
\end{equation}
where $\varepsilon$ is the target precision.
Setting $\varepsilon = 10^{-2}$, the theoretical cost of Static and Dynamic Monte Carlo is approximately 
$2.46\times10^{38}\times10^4$ samples~(up to constants), 
which is astronomically larger than the number of noisy circuit executions required by NDE-CS 
(about $8.6\times10^4$ in Fig.~\ref{fig:numerical_result_16qubits}).  

We also verify this scaling numerically by simulating the Ising circuit $\mathcal{C}(\bm{\theta})_{16,5}$ 
using SMC, as shown in Fig.~\ref{fig:compare_SMC}.  
The figure plots the relative error $\varepsilon_{\mathrm{rel}}$ as a function of the sample number $N_{\mathrm{sample}}$.  
Consistent with the theoretical prediction of Eq.~\eqref{eq:compare_sample_SPMC}, 
the relative error exhibits a clear inverse-square-root relationship, 
$\varepsilon_{\mathrm{rel}} \propto 1/\sqrt{N_{\mathrm{sample}}}$,  
demonstrating the characteristic Monte Carlo convergence behavior.  
By extrapolating the fitted curve, we find that achieving a target precision of $10^{-2}$ 
would require approximately $5.29\times10^{41}$ samples—again confirming that 
the purely classical SMC method incurs an exponentially higher cost 
than our NDE-CS protocol that leverages noisy quantum hardware.

\begin{figure}[htb!]
 \centering
 \includegraphics[width = 0.86\columnwidth]{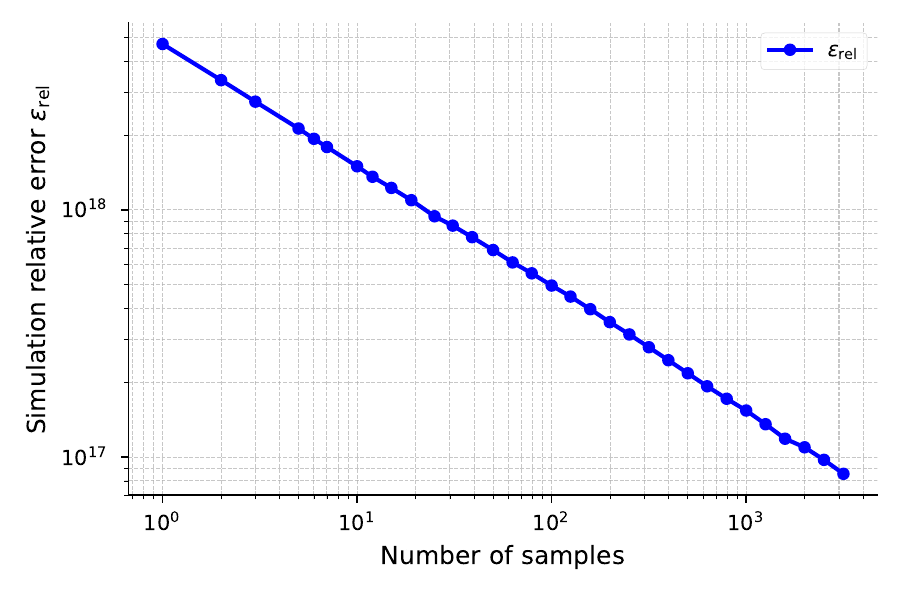}
 \caption{\justifying
  Simulation of the 16-qubit, 5-step Ising circuit $\mathcal{C}(\bm{\theta})_{16,5}$ 
using the SMC method.  
The figure shows the relation between the number of samples and the simulation relative error $\varepsilon_{\mathrm{rel}}$.  
As illustrated, the relative error decreases proportionally to the inverse square root of the sample number, 
i.e., $\varepsilon_{\mathrm{rel}} \propto 1/\sqrt{N_{\mathrm{sample}}}$, 
demonstrating the expected Monte Carlo convergence behavior.
  }
 \label{fig:compare_SMC}
\end{figure}

\begin{figure}[htb!]
 \centering
 \includegraphics[width = 0.95\columnwidth]{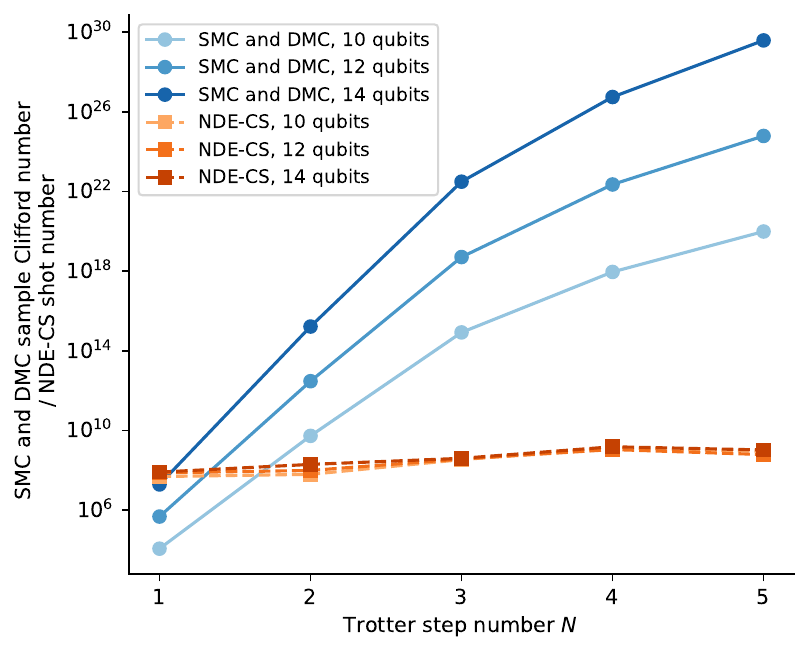}
 \caption{\justifying
Simulation cost of 10-, 12-, and 14-qubit second-order Trotter circuits for both NDE-CS (red) 
and SMC as well as DMC (SMC and DMC, blue).  
The plot shows the number of noisy-hardware shots (NDE-CS) and the number of Clifford circuits (SMC and DMC, defined in Eq.~\eqref{eq:compare_sample_SPMC}) required to achieve a relative error of $\varepsilon_{\mathrm{rel}} = 1\%$.  
  }
 \label{fig:NHCS_vs_SMC_scaling}
\end{figure}

A complementary comparison is presented in Fig.~\ref{fig:NHCS_vs_SMC_scaling}, 
which compares NDE-CS with SMC and DMC on 10-, 12-, and 14-qubit second-order Trotter circuits over varying Trotter step numbers $N$. 
For a target relative error of $\varepsilon_{\mathrm{rel}} = 1\%$, 
both the number of noisy-hardware shots required by NDE-CS and the number of sampled Clifford circuits required by SMC and DMC increase with circuit depth. 
However, their growth behaviors differ markedly: 
the number of Clifford circuits required by SMC and DMC grows essentially exponentially with the number of Trotter steps, 
whereas the shot complexity of NDE-CS increases much more slowly. 
Moreover, as the number of qubits increases, the costs of SMC and DMC increase markedly, while the cost of NDE-CS exhibits only a weak dependence on the system size. 
This scaling behavior indicates that the advantage of NDE-CS becomes increasingly pronounced for quantum circuits with more qubits.

Although a direct cost equivalence between evaluating a single Clifford circuit expectation and executing a noisy quantum circuit is difficult to establish, Fig.~\ref{fig:NHCS_vs_SMC_scaling} clearly demonstrates that the cost gap between NDE-CS and SMC (and DMC) grows rapidly with increasing circuit depth and qubit number. 
This separation is expected to become even more pronounced as quantum hardware continues to evolve—with decreasing per-shot cost, increasing qubit counts, and steadily improving gate fidelities—thereby enabling the execution of larger and deeper circuits on noisy devices.  
Since NDE-CS leverages hardware-executable circuits during its training stage, its effective sampling cost scales favorably with these hardware advances,  
whereas DMC and SMC face exponentially increasing overhead with a substantially larger prefactor, leading to far steeper scaling than that of NDE-CS even though both methods exhibit exponential behavior.  
Consequently, the relative advantage of NDE-CS over purely classical simulation methods such as DMC and SMC will continue to strengthen as quantum processors advance toward regimes with significantly more qubits and substantially deeper circuits.

\subsection{Comparisons with Sparse Pauli Dynamics}\label{subsec:compare_SPD}

In recent years, simulation methods based on path-integral formulations have achieved notable progress, 
significantly expanding the class of quantum circuits that can be efficiently simulated on classical hardware~\cite{shaoSimulatingNoisyVariational2024,10.1063/5.0269149,gao2018efficientclassicalsimulationnoisy,aharonovPolynomialTimeClassicalAlgorithm2023a,schusterPolynomialTimeClassicalAlgorithm2025,begusicRealTimeOperatorEvolution2025,angrisaniClassicallyEstimatingObservables2025,doi:10.1126/sciadv.adk4321,rudolph2023classicalsurrogatesimulationquantum,fontanaClassicalSimulationsNoisy2025a,bermejo2024quantumconvolutionalneuralnetworks,lerch2024efficientquantumenhancedclassicalsimulation,rudolphPauliPropagationComputational2025}.  
Among these, the SPD method~\cite{10.1063/5.0269149} 
provides a powerful framework for classically simulating quantum circuits containing a limited number of non-Clifford operations.  
SPD reformulates the circuit evolution as a deterministic sum over \emph{Pauli paths}, 
where each path represents a specific sequence of Pauli operators propagating through the circuit.  
Each Pauli path is associated with a complex amplitude, 
and the overall circuit dynamics can be approximated by truncating this sum—retaining only the paths 
with the largest amplitudes while discarding those with negligible contributions.  
The total number of retained Pauli paths $M_{\max}$ therefore directly determines both the computational cost 
and the achievable accuracy of an SPD simulation.
The effectiveness and scalability of SPD have been demonstrated for a variety of practically relevant circuits~\cite{doi:10.1126/sciadv.adk4321,begusicRealTimeOperatorEvolution2025}.  
In addition, SPD-inspired formulations have recently been leveraged to enhance learning-based quantum error mitigation techniques~\cite{zhang2024clifford}.  
These results establish SPD as a representative and competitive classical simulation approach 
in regimes where the number of non-Clifford operations remains sufficiently limited.

In this subsection, we compare NDE-CS with the SPD method, using the implementation of Ref.~\cite{doi:10.1126/sciadv.adk4321}, with additional methodological details provided in Supplementary Material~\ref{appendix:SPD_details}.
The specific implementation of SPD introduced in Ref.~\cite{doi:10.1126/sciadv.adk4321} has been shown to enable converged classical simulations of quantum circuit expectation values at system sizes that were previously beyond the reach of exact methods, including recent large-scale experimental demonstrations~\cite{kimEvidenceUtilityQuantum2023a}.
Here, we introduce a structured family of circuits and use it to contrast the scaling behavior of SPD and NDE-CS.
For this circuit family, the SPD simulation cost—quantified by the maximum number of Pauli paths $M_{\max}$ required to achieve a fixed target accuracy—grows exponentially with the circuit size and depth, 
whereas the sampling cost of NDE-CS exhibits a markedly weaker dependence on the number of qubits.
This establishes a regime where access to noisy quantum hardware provides NDE-CS a clear computational advantage over SPD.

\subsubsection{A structured circuit family}

\begin{figure}[htb!]
    \centering
    \begin{subfigure}{0.85\columnwidth}
        \centering
        \includegraphics[width=\linewidth]{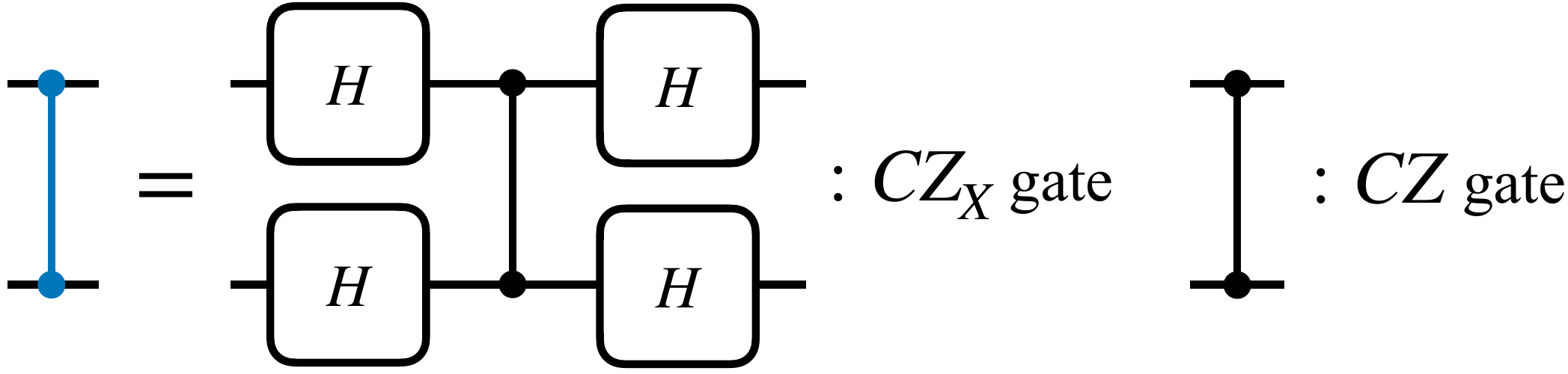}
        \caption{}
        \label{fig:ising_circuit1}
    \end{subfigure}
    \vspace{0.5em}
    \begin{subfigure}{0.98\columnwidth}
        \centering
        \includegraphics[width=\linewidth]{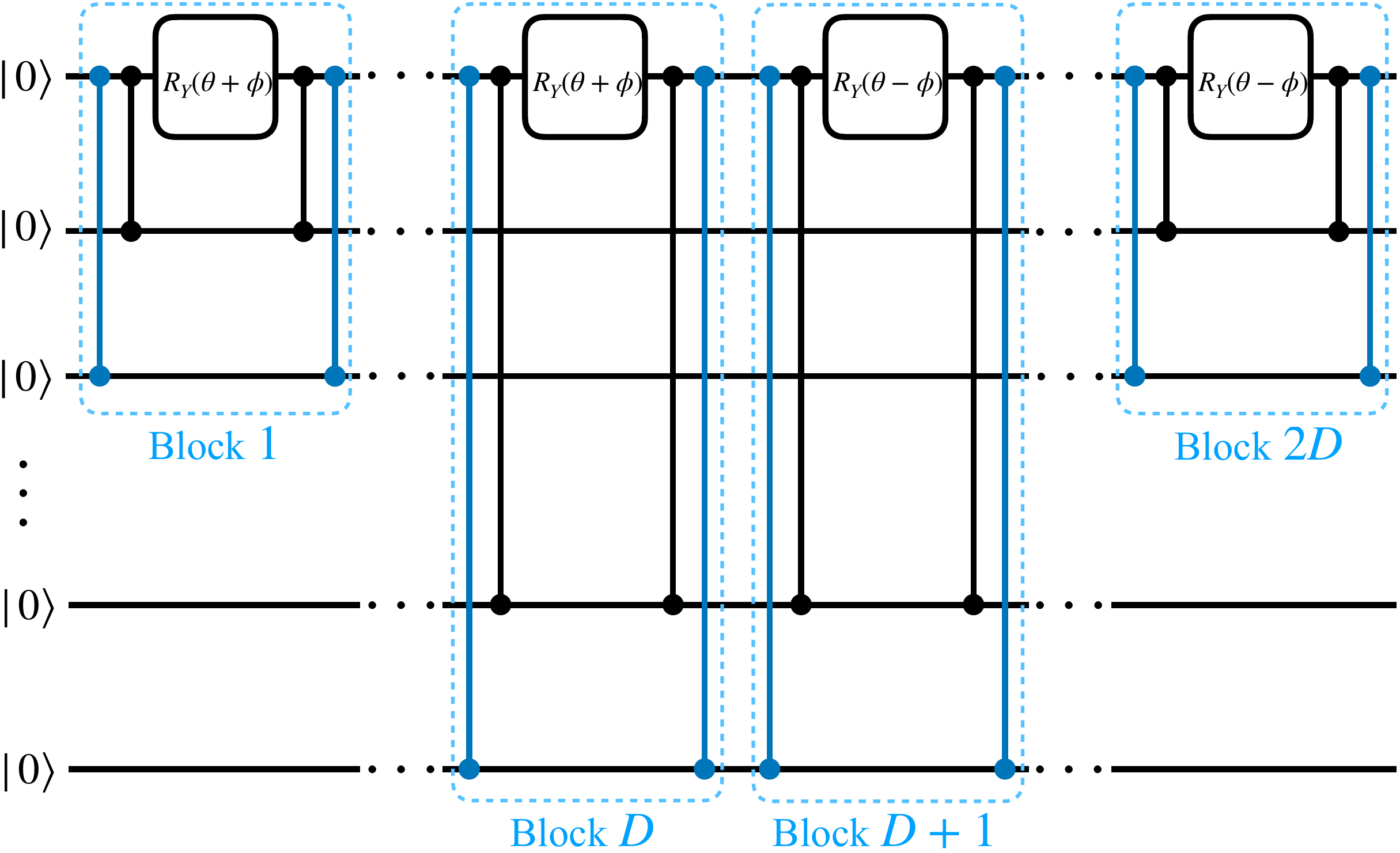}
        \caption{}
        \label{fig:ising_noise1}
    \end{subfigure}
    \caption{\justifying
    Structured circuit family used to compare NDE-CS with SPD.
(a) Definition of the controlled-$Z$ gate in the $X$ basis, $\mathrm{C}Z_X=(H\!\otimes\!H)\,\mathrm{C}Z\,(H\!\otimes\!H)$.
In circuit diagrams, $\mathrm{C}Z_X$ is represented using blue control nodes and connections, while the standard $\mathrm{C}Z$ gate is shown in black.
(b) Circuit architecture on $n=2D+1$ qubits composed of $2D$ sequential blocks.
The input state is $\ket{0}^{\otimes n}$ and the measured observable is $Z_1$.
    }
    \label{fig:numerical_circuit_structure_compare_SPD}
\end{figure}

We first define a controlled-$Z$ gate in the $X$ basis, denoted as $\mathrm{C}Z_X$, as shown in Fig.~\ref{fig:numerical_circuit_structure_compare_SPD}(a).
The $\mathrm{C}Z_X$ gate is constructed by conjugating a standard $\mathrm{C}Z$ gate with Hadamard gates on both qubits, i.e.,
\begin{equation}
  \mathrm{C}Z_X := (H\!\otimes\!H)\,\mathrm{C}Z\,(H\!\otimes\!H).
\end{equation}
For notational clarity in circuit diagrams, we represent $\mathrm{C}Z_X$ using blue control nodes and connections, while the standard $\mathrm{C}Z$ gate is depicted in black.

We consider a family of circuits acting on an odd number of qubits,
\begin{equation}
  n = 2D + 1, \qquad D = 1,2,\ldots,
\end{equation}
with input state $\ket{0}^{\otimes n}$ and the observable $Z_1$ measured on the first qubit.
As illustrated in Fig.~\ref{fig:numerical_circuit_structure_compare_SPD}(b), the circuit consists of $2D$ sequentially arranged blocks.
Each block contains two $\mathrm{C}Z$ gates, two $\mathrm{C}Z_X$ gates, and a single-qubit rotation $R_Y$, and acts nontrivially only on three qubits.
Specifically, within a given block labeled by $d$, the gate sequence is
\begin{equation}
  \mathrm{C}Z_X^{(1,\,2d+1)}\;
  \mathrm{C}Z^{(1,\,2d)}\;
  R_{Y_1}(\theta+\phi)\;
  \mathrm{C}Z^{(1,\,2d)}\;
  \mathrm{C}Z_X^{(1,\,2d+1)},
\end{equation}
for $d = 1,2,\cdots, D$, and
\begin{equation}
  \begin{aligned}
      \mathrm{C}Z_X^{(1,\,2(2D+1-d)+1)}\;
  \mathrm{C}Z^{(1,\,2(2D+1-d))}\;
  R_{Y_1}(\theta-\phi)\; \\
  \mathrm{C}Z^{(1,\,2(2D+1-d))}\;
  \mathrm{C}Z_X^{(1,\,2(2D+1-d)+1)},
  \end{aligned}
\end{equation}
for $d = D+1, D+2, \cdots, 2D$.
Here, $R_{Y_1}$ denotes a rotation of the first qubit about the Pauli-$Y$ axis, and $\mathrm{C}Z^{(i,j)}$ denotes a $\mathrm{C}Z$ gate acting on the $i$-th and $j$-th qubits (similarly for $\mathrm{C}Z_X$).
Notably, the rotation angles of the $R_{Y_1}$ gates differ between the first $D$ blocks and the last $D$ blocks.

We now focus on the parameter choice $\theta=0$ and $\phi=\pi/4$ in the $R_Y$ gate.
In this case, the unitary operations corresponding to the $D$-th and $(D+1)$-th blocks are exact inverses of each other and therefore cancel.
The same cancellation occurs between the $(D-1)$-th and $(D+2)$-th blocks, and so on, resulting in a complete telescoping cancellation across the circuit.
As a consequence, the full circuit unitary is exactly equivalent to the identity operation,
and the noiseless expectation value of the observable is $\langle Z_1 \rangle = 1$.

\subsubsection{Exponential cost of SPD for the structured circuit family}

\begin{figure*}[htb!]
    \centering
    \begin{subfigure}{0.95\columnwidth}
        \centering
        \includegraphics[width=\linewidth]{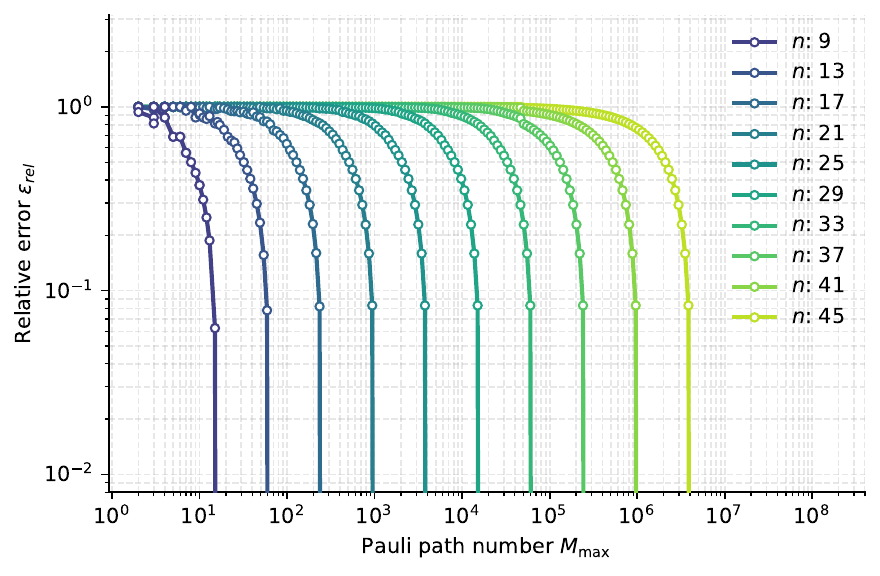}
        \caption{}
        \label{fig:numerical_result_16qubits_abs_error}
    \end{subfigure}
    \vspace{0.5em}
    \begin{subfigure}{0.95\columnwidth}
        \centering
        \includegraphics[width=\linewidth]{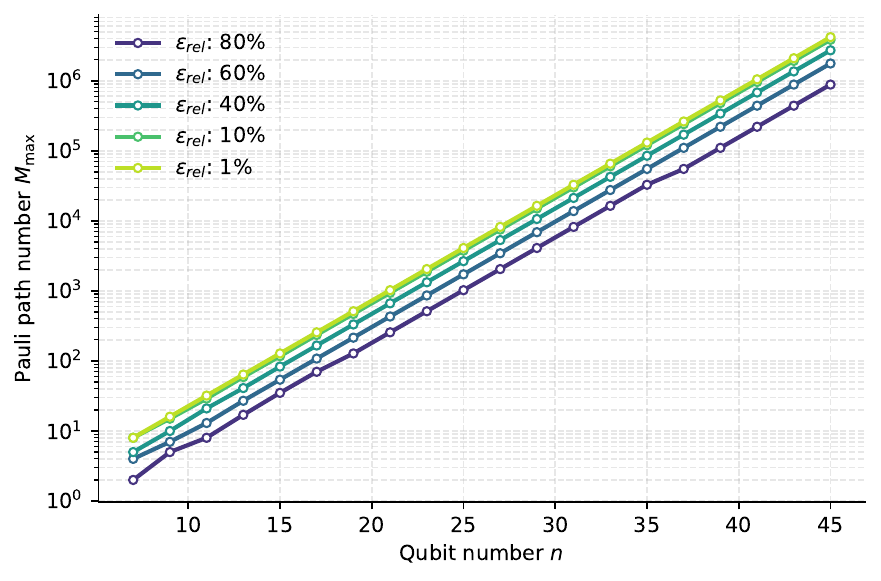}
        \caption{}
        \label{fig:numerical_result_16qubits_relative_error}
    \end{subfigure}
\caption{\justifying
Numerical scaling of SPD for the structured circuit family defined in Fig.~\ref{fig:numerical_circuit_structure_compare_SPD} with $\theta = 0$ and $\phi = \pi/4$.
(a) Relative error $\varepsilon_{\mathrm{rel}}$ of the SPD simulation as a function of the maximum number of retained Pauli paths $M_{\max}$ for different qubit numbers $n$.
(b) Number of Pauli paths required to reach fixed relative error thresholds as a function of the qubit number.
}
    \label{fig:artificial_circ_SPD}
\end{figure*}

We now analyze the computational cost of simulating this circuit family using SPD.
We first consider the case of exact simulation, followed by the more practical setting in which a finite approximation error is allowed.

For exact simulation, no Pauli path can be truncated in the SPD procedure.
Numerically, we find that during backward Pauli propagation through the final $D$ blocks of the circuit, the number of Pauli paths doubles after each block.
Consequently, after propagating through all $D$ blocks, the total number of Pauli paths reaches $2^{D}$.
Moreover, each Pauli path carries an equal weight with magnitude $2^{-D/2}$.
As a result, exact simulation requires retaining all $2^{D}$ Pauli paths, implying that the computational cost of SPD grows exponentially with $D$ (and hence exponentially with the qubit number $n=2D+1$).

We next turn to the more realistic scenario in which a finite simulation error is permitted.
We numerically simulate the structured circuit family for $D=4,5,\ldots,22$, corresponding to qubit numbers $n=9,11,\ldots,45$, and analyze the trade-off between simulation accuracy and the maximum number of retained Pauli paths $M_{\max}$.
The results are summarized in Fig.~\ref{fig:artificial_circ_SPD}.

Fig.~\ref{fig:artificial_circ_SPD}(a) shows the relative error $\varepsilon_{\mathrm{rel}}$ of the SPD simulation as a function of the maximum Pauli path number $M_{\max}$ for different qubit numbers $n$.
For each fixed $n$, increasing $M_{\max}$ systematically reduces the relative error, as expected.
However, as $n$ increases, achieving the same target relative error requires retaining a rapidly growing number of Pauli paths.
This exponential scaling is made more explicit in Fig.~\ref{fig:artificial_circ_SPD}(b), where we plot the Pauli path number required to reach fixed relative error thresholds ($80\%$, $60\%$, $40\%$, $10\%$, and $1\%$) as a function of the qubit number $n$.
When the Pauli path number $M_{\max}$ is shown on a logarithmic scale, the data exhibit an approximately linear dependence on $n$, indicating that the required number of Pauli paths grows exponentially with the system size.
These results demonstrate that, even when a finite approximation error is allowed, the SPD simulation cost for this circuit family remains exponential in the number of qubits.

\subsubsection{Performance of NDE-CS for the structured circuit family}

\begin{figure*}[htb!]
 \centering
 \includegraphics[width = 1.95\columnwidth]{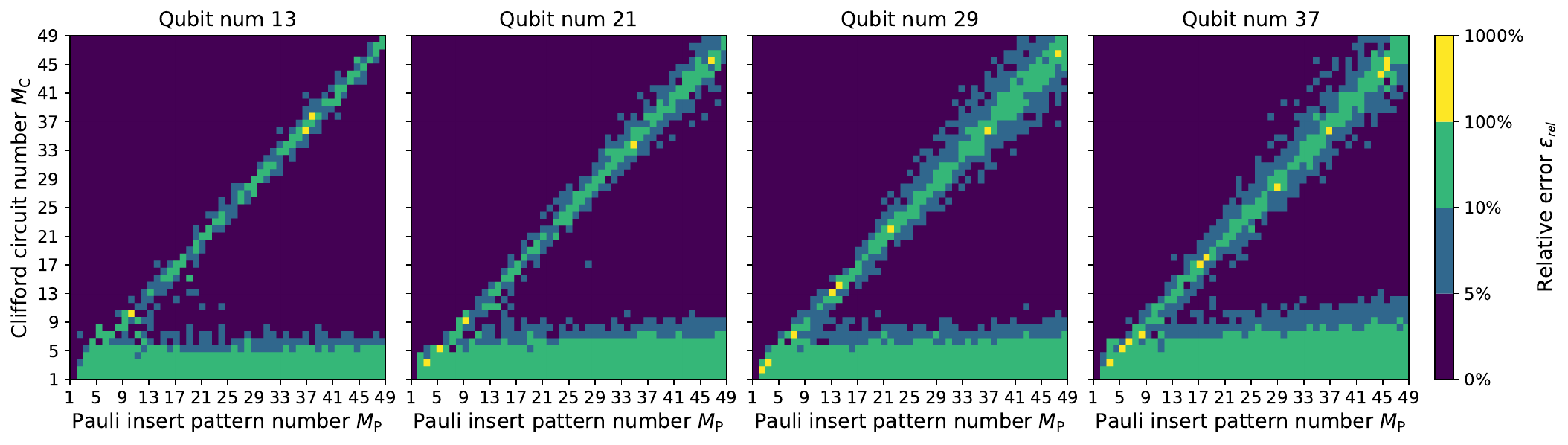}
\caption{\justifying
Numerical performance of NDE-CS for the structured circuit family defined in Fig.~\ref{fig:numerical_circuit_structure_compare_SPD} with $\theta = 0$ and $\phi = \pi/4$.
Each panel shows the relative error $\varepsilon_{\mathrm{rel}}$ between the NDE-CS estimate and the noiseless circuit expectation as a function of the number of sampled Clifford circuits $M_{\mathrm{C}}$ (vertical axis) and the number of Pauli insertion patterns $M_{\mathrm{P}}$ (horizontal axis), for qubit numbers $n=13,21,29,$ and $37$.
Dark regions correspond to small relative errors, indicating that high accuracy can be achieved with a small number of sampled Clifford circuits and Pauli insertion patterns, even as the system size increases.
}
 \label{fig:artificial_circ_NHCS}
\end{figure*}

We now examine the performance of NDE-CS on the same structured circuit family.
We consider circuits with qubit numbers $n=13,21,29,$ and $37$.
The noise model used to simulate the noisy quantum circuits is identical to that employed in Section~\ref{subsec:NHCS_results}, and each circuit evaluation uses $N_{\mathrm{shot}}=2^{14}$ measurement shots.
To ensure a nonvanishing training signal, when sampling the Clifford circuits required by NDE-CS we restrict attention to Clifford realizations in which the rotation angles in the $d$-th and $(2D+1-d)$-th blocks are chosen to be equal for $d=1,2,\ldots,D$.
This constraint guarantees that the corresponding Clifford circuit expectations are nonzero and therefore suitable for training.

For each system size, we evaluate the relative error between the NDE-CS estimate and the noiseless target expectation value as a function of the number of sampled Clifford circuits $M_{\mathrm{C}}$ and the number of Pauli insertion patterns $M_{\mathrm{P}}$, in close analogy to Fig.~\ref{fig:numerical_result_16qubits}(b).
The results are summarized in Fig.~\ref{fig:artificial_circ_NHCS}.
Remarkably, we find that even with a single sampled Clifford circuit and a single Pauli insertion pattern ($M_{\mathrm{C}}=M_{\mathrm{P}}=1$), the NDE-CS estimate already achieves relative errors below $5\%$ for all the considered system sizes $n=13,21,29,$ and $37$.
This behavior stands in clear contrast to SPD, for which the required computational resources grow exponentially with the qubit number for the same circuit family.

Finally, we emphasize that the structured circuit family discussed above is evaluated at the specific parameter point $\theta=0$ and $\phi=\pi/4$, for which the overall circuit unitary can be reduced to the identity via the cancellation structure described earlier.
As a result, the target expectation value can, in principle, be obtained by classical means, even though it cannot be efficiently simulated by SPD due to the exponential growth of Pauli paths.
However, we would like to point out that the advantage of NDE-CS demonstrated here is not restricted to this special case. Indeed, in Suppl.~Mat.~\ref{appendix:more_compare_SPD}, we consider a perturbed setting with $\theta=0.1$ and $\phi=\pi/4$, where the circuit no longer admits exact cancellation and is not trivially classically simulable, and still observe a similar comparison, indicating that the advantage of NDE-CS over SPD persists for structured non-Clifford circuits with generic rotation angles.

\section{Conclusions and Discussions}\label{sec:Discussion}

In this work, we develop a framework for the classical simulation of non-Clifford quantum circuits enhanced by noisy quantum circuits. 
Starting from the SMC formulation~\cite{seddonQuantifyingMagicMultiqubit2019}, 
we first introduce the SPMC method, 
which decomposes parameterized quantum circuits into a linear combination of Clifford circuits that retain the same architecture and connectivity as the original circuit.  
This structure-preserving design guarantees that every sampled trajectory corresponds to a physically consistent circuit, faithfully capturing device-level noise behavior.  
Although SPMC exhibits an asymptotic sample complexity no smaller than that of the SMC method, it establishes a physically grounded and hardware-compatible foundation for further extensions.

Building upon SPMC, we propose the NDE-CS protocol, which utilizes noisy quantum hardware as a data source to enhance the efficiency of classical simulations.  
The key idea of NDE-CS is to leverage measurements from noisy Clifford and target circuits to learn coefficients that remain valid for estimating noiseless expectation values.  
We proved that when the underlying noise model is Pauli and angle-independent, and sufficiently many Pauli insertion patterns are sampled, the coefficients learned from noisy circuits converge to those valid for their noiseless counterparts, thereby enabling accurate classical estimation assisted by real quantum hardware.
This insight turns hardware noise from an obstacle into a computational resource.

Numerical simulations on Trotterized Ising circuits demonstrate the strong empirical performance of NDE-CS across a broad range of circuit sizes and depths. 
While purely classical approaches such as Static and Dynamic Monte Carlo require sample counts that grow rapidly—indeed exponentially—with circuit depth to achieve a given precision, 
NDE-CS consistently achieves the same relative errors using orders of magnitude fewer samples. 
For instance, in the 16-qubit, 5-step Trotterized Ising circuit considered in this work, classical Monte Carlo methods require an estimated $\sim 10^{41}$ samples to reach a relative error of $10^{-2}$, 
whereas NDE-CS attains comparable accuracy with only $\sim 10^{5}$ noisy circuit executions. 
More generally, our numerical results show that the sampling cost of NDE-CS exhibits a markedly weaker dependence on the qubit number than that of purely classical methods. 
These findings confirm that combining noisy quantum data with classical stabilizer simulations can drastically reduce sample complexity without sacrificing estimation accuracy, 
thereby establishing noise-assisted hybrid computation as a practically advantageous paradigm for simulating large and deep quantum circuits.

In addition to Monte Carlo-based approaches, we further compare NDE-CS with SPD.
By constructing a structured family of circuits, we identified a regime in which the computational cost of SPD—quantified by the number of Pauli paths required to reach a fixed accuracy—grows exponentially with the size of the system.
In contrast, NDE-CS achieves accurate estimation for the same circuit family with a sampling cost that shows only a weak dependence on the number of qubits.
These results highlight a complementary regime in which access to noisy quantum hardware allows NDE-CS to provide a clear practical advantage over purely classical path-based simulation methods.

Beyond the present framework, an exciting future direction is to extend the noisy-device-enhanced paradigm to classical simulators based on tensor networks.  
In tensor-network simulations, the computational cost typically grows rapidly with the bond dimension, limiting scalability in highly entangled or deep non-Clifford regimes.  
By incorporating data obtained from noisy quantum hardware—such as reduced density matrices or local observable statistics—one could construct a \textit{noisy-device-enhanced tensor network} approach, where empirical quantum data provide effective priors for tensor truncation or network contraction.  
Such a hybrid strategy may substantially reduce the simulation cost while maintaining high fidelity, offering a promising route toward large-scale, noise-assisted classical simulation of complex quantum dynamics.

In summary, our work establishes a systematic progression from SMC simulation to SPMC, and finally to the NDE-CS protocol. 
Together, the SPMC and NDE-CS protocols present a coherent theoretical and practical pathway toward scalable, physically grounded, and noise-resilient quantum-classical simulation frameworks.

\textit{Note on Ref.~\cite{denzler2026simulationnoisyquantumcircuits}:}
Just prior to submitting this manuscript to the arXiv, we became aware of the work~\cite{denzler2026simulationnoisyquantumcircuits}, which presents a similar result to the case where $\theta_l = \pi/4$ for $l = 1,2,\dots,L$ in Eq.~\eqref{eq:complexity_noise} of this manuscript.

\begin{acknowledgements}

We thank Weixiao Sun and Yuguo Shao for valuable discussions. R.Z. and Z.W. were supported by {National Natural Science Foundation of China (Grant Nos. 62272259 and 62332009), Beijing Natural Science Foundation (Grant No. Z220002), and Beijing Science and Technology Planning Project (Grant No. Z25110100810000).}

\end{acknowledgements}


\bibliography{ref}

\clearpage
\widetext

\appendix
\section*{Supplementary Material}
\renewcommand{\thesection}{\Roman{section}}
\renewcommand{\appendixname}{Supplement Material}


\section{Supplementary Definitions and the Multi-Layer Static Monte Carlo Framework in Section~\ref{subsec:review_stabilizer_based_simulator}}
\label{app:sec:static_MC_multi_layer}

We first provide rigorous definitions of the terms mentioned in Section~\ref{subsec:review_stabilizer_based_simulator}.

Let $\mathrm{STAB}_n$ be the set of $n$-qubit stabilizer states. In an abuse of notation we will use $|\phi\rangle \in \mathrm{STAB}_n$ to mean a pure state from this set, and $\rho \in \mathrm{STAB}_n$ to mean the density matrix of a state taken from the stabilizer polytope, the convex hull of pure stabilizer states.
Define $\mathrm{SP}_{n,m}$ to be the set of $n$-qubit 
operations $\mathcal{E}$ such that 
\[
(\mathcal{E} \otimes \mathcal{I}_m)\sigma \in \mathrm{STAB}_{n+m} \quad \text{for all } \sigma \in \mathrm{STAB}_{n+m},
\] 
where $\mathcal{I}_m$ is the identity map for an $m$-qubit Hilbert space. The set $\mathrm{SP}_{n,0}$ 
consists of channels that map $n$-qubit stabilizer states to $n$-qubit stabilizer states. 
We say a channel is completely stabilizer-preserving~(CSP) if $\mathcal{E} \in \mathrm{SP}_{n,m}$ 
for all $m$. 
Any CSP channel is necessarily CPTP~\cite{Heimendahl2022axiomatic}.

A more rigorous definition of $\mathcal{R}_*(\mathcal{E})$ is given by
\begin{equation}
\mathcal{R}_*(\mathcal{E}) = 
\min_{\Lambda_\pm \in SP_{n,n} \cap CPTP} 
\left\{ 2p + 1 : (1+p)\Lambda_+ - p \Lambda_- = \mathcal{E}, \, p \ge 0 \right\},
\end{equation}
where $\Lambda_\pm$ are completely stabilizer-preserving and CPTP maps. 
For a CPTP channel $\mathcal{E}$, the channel robustness satisfies $\mathcal{R}_*(\mathcal{E}) = 1$ if $\mathcal{E}$ is completely stabilizer-preserving, and $\mathcal{R}_*(\mathcal{E}) > 1$ otherwise.

The magic capacity, defined as
\begin{equation}
\mathcal{C}(\mathcal{E}) = 
\max_{|\phi\rangle \in \mathrm{STAB}_{2n}} 
\mathcal{R}\!\left[(\mathcal{E} \otimes \mathcal{I}_n)\, |\phi\rangle\!\langle \phi| \right],
\end{equation}
with $\mathcal{I}_n$ being the identity channel on an $n$-qubit Hilbert space, and 
$\mathcal{R}$ the robustness of magic of a quantum state, defined by
\begin{equation}
\mathcal{R}(\rho) = 
\min_{\vec{q}}
\left\{
\|\vec{q}\|_1 :
\sum_j q_j |\phi_j\rangle\!\langle\phi_j| = \rho,\;
|\phi_j\rangle \in \mathrm{STAB}_n
\right\}.
\label{eq:robustness0}
\end{equation}

Second, we provide the details of the SMC method for general circuits composed of multiple layers. 
While the main text summarized the overall scaling of the sample complexity as 
$\mathcal{O}\!\left(\prod_{l=1}^{L}\mathcal{R}_*(\mathcal{E}_l)^2\right)$, 
here we explicitly construct the layer-wise quasiprobability decomposition, define the stabilizer trajectories, 
and show how unbiased estimators are obtained through sequential sampling.

We begin by expressing each layer $\mathcal{E}_l$ as a linear combination of completely stabilizer-preserving channels,
\begin{equation}
  \mathcal{E}_l = \sum_{k_l} q_{k_l}^{(l)} \, \mathcal{S}_{k_l}^{(l)},
  \qquad \text{with} \quad
  \sum_{k_l} q_{k_l}^{(l)} = 1.
  \label{eq:app_multi_layer_decomp}
\end{equation}
Substituting these decompositions into 
$\mathcal{E} = \mathcal{E}_L \circ \cdots \circ \mathcal{E}_1$ 
yields a sum over all possible stabilizer-preserving trajectories
$\vec{k} = (k_1, \ldots, k_L)$,
\begin{equation}
  \mathcal{E} = 
  \sum_{\vec{k}} 
  \left( \prod_{l=1}^{L} q_{k_l}^{(l)} \right)
  \big(
    \mathcal{S}_{k_L}^{(L)} 
    \circ \cdots \circ 
    \mathcal{S}_{k_1}^{(1)}
  \big),
  \label{eq:app_multi_layer_sum}
\end{equation}
where each trajectory $\vec{k}$ represents one stochastic realization of the circuit evolution.

To estimate the expectation value 
$\langle O \rangle = \mathrm{Tr}[O \, \mathcal{E}(\rho)]$, 
we sequentially sample one stabilizer-preserving map 
$\mathcal{S}_{k_l}^{(l)}$ from each layer according to 
\[
  p_{\vec{k}} =
  \frac{\prod_{l=1}^{L} |q_{k_l}^{(l)}|}
       {\prod_{l=1}^{L} \mathcal{R}_*(\mathcal{E}_l)},
  \qquad
  s_{\vec{k}} = \prod_{l=1}^{L} \mathrm{sign}\!\big(q_{k_l}^{(l)}\big).
\]
The resulting unbiased Monte Carlo estimator is
\begin{equation}
  \hat{O} 
  = 
  \frac{\prod_{l=1}^{L} \mathcal{R}_*(\mathcal{E}_l)}{M}
  \sum_{i=1}^{M} 
  s_{\vec{k}_i}
  \, \mathrm{Tr}
  \!\left(
    O \, 
    \mathcal{S}_{k_L^{(i)}}^{(L)} 
    \circ \cdots \circ
    \mathcal{S}_{k_1^{(i)}}^{(1)}(\rho)
  \right).
  \label{eq:app_multi_layer_estimator}
\end{equation}
Following the same argument as in the single-layer case, the estimator variance satisfies 
$\mathrm{Var}(\hat{O}) \!\sim\! 
\prod_{l=1}^{L}\mathcal{R}_*(\mathcal{E}_l)^2 / M$, 
which leads to the overall sample complexity stated in the main text.

\section{Proofs of Lemma~\ref{lemma:optimal_clifford_decomp} and Results in Section~\ref{subsec:SPMC}}
\label{appendix:proof_SPMC}

In this section, we provide the proof of Lemma~\ref{lemma:optimal_clifford_decomp} and show that 
$$
\mathcal{R}_*(\mathcal{R}_P(\theta)) = \mathcal{C}(\mathcal{R}_P(\theta)) = |\sin(\theta)| + |\cos(\theta)|,
$$
for the Pauli operator $P \in \{I, X, Y, Z\}^{\otimes n}$.

We begin with the case $P = Z_1$, where $Z_1$ denotes the operator 
$Z_1 = Z \otimes I \otimes \cdots \otimes I$.

According to the Sandwich Theorem~\cite{seddonQuantifyingMagicMultiqubit2019}, for the channel $\mathcal{R}_{Z_1}(\theta)$, we have
\begin{equation}\label{eq:app:robust1}
    \mathcal{R}(\Phi_{\mathcal{R}_{Z_1}(\theta)}) \le \mathcal{C}(\mathcal{R}_{Z_1}(\theta)) \le \mathcal{R}_*(\mathcal{R}_{Z_1}(\theta)),
\end{equation}
where $\Phi_{\mathcal{R}_{Z_1}(\theta)}\coloneqq(\mathcal{E}\otimes\mathcal{I})(\ketbra{\Omega}{\Omega})$, $\ket{\Omega}\coloneqq\frac{1}{\sqrt{2^n}}\sum_i\ket{ii}$ is the maximally entangled state.
Since $\Phi_{\mathcal{R}_{Z_1}(\theta)}$ can be written as the tensor product of $\Phi_{\mathcal{R}_{Z}(\theta)}$ and $n-1$ Bell states, and Bell states are stabilizer states, tensoring a quantum state with stabilizer states does not change its robustness of magic~\cite{howardApplicationResourceTheory2017}. Therefore, we have
\begin{equation}\label{eq:app:robust2}
\mathcal{R}\!\left(\Phi_{\mathcal{R}_{Z}(\theta)}\right)
=
\mathcal{R}\!\left(\Phi_{\mathcal{R}_{Z_1}(\theta)}\right).
\end{equation}
Since $\Phi_{\mathcal{R}_{Z}(\theta)}=\mathcal{R}_{Z}(\theta)\cdot\mathrm{CNOT}\ketbra{+0}{+0}\mathrm{CNOT}\cdot\mathcal{R}_{Z}(\theta)^\dagger=\mathrm{CNOT}\cdot\mathcal{R}_{Z}(\theta)\ketbra{+0}{+0}\mathcal{R}_{Z}(\theta)^\dagger\cdot\mathrm{CNOT}$ and the robustness of magic of a single-qubit state with Bloch vector $(x,y,z)$ is given by $\max\{1,|x|+|y|+|z|\}$, we have
\begin{equation}\label{eq:app:robust3}
    \mathcal{R}(\Phi_{\mathcal{R}_{Z}(\theta)})=\mathcal{R}(\mathcal{R}_{Z}(\theta)\ketbra{+}{+}\mathcal{R}_{Z}(\theta)^\dagger)=|\cos(\theta)|+|\sin(\theta)|
\end{equation}

We now provide a constructive proof that $\mathcal{R}_*(\mathcal{R}_{Z_1}(\theta)) \le |\cos(\theta)| + |\sin(\theta)|$.  
It can be directly verified that
\[
\mathcal{R}_P(\theta) = \sum_{k = 0}^3 a_k\, \mathcal{R}_P(k\pi/2),
\]
where
\[
\begin{aligned}
  a_0 &= \tfrac{|\cos \theta|}{2(|\sin \theta| + |\cos \theta|)}  + \tfrac{\cos \theta}{2}, \quad
  a_2 = \tfrac{|\cos \theta|}{2(|\sin \theta| + |\cos \theta|)}  - \tfrac{\cos \theta}{2},\\[4pt]
  a_1 &= \tfrac{|\sin \theta|}{2(|\sin \theta| + |\cos \theta|)}  + \tfrac{\sin \theta}{2}, \quad
  a_3 = \tfrac{|\sin \theta|}{2(|\sin \theta| + |\cos \theta|)}  - \tfrac{\sin \theta}{2},
\end{aligned}
\]
and thus $\sum_{k=0}^3 |a_k| = |\cos(\theta)| + |\sin(\theta)|$.  
By the definition of channel robustness, this implies
\[
\mathcal{R}_*(\mathcal{R}_{Z_1}(\theta)) \le |\cos(\theta)| + |\sin(\theta)|.
\]

Combining Eq.~\eqref{eq:app:robust1}, Eq.~\eqref{eq:app:robust2} and Eq.~\eqref{eq:app:robust3}, we conclude that
\[
\mathcal{R}(\Phi_{\mathcal{R}_{Z_1}(\theta)}) 
= \mathcal{C}(\mathcal{R}_{Z_1}(\theta)) 
= \mathcal{R}_*(\mathcal{R}_{Z_1}(\theta)) 
= |\cos(\theta)| + |\sin(\theta)|.
\]

For any Pauli operator $P$, there exists a Clifford gate $C$ such that
$C P C^\dagger = Z_1$.
By the definitions of channel robustness and magic capacity, and noting that pre- and post-composition of a quantum channel with Clifford gates does not change either the channel robustness or the magic capacity, we obtain
\[
\mathcal{R}_*(\mathcal{R}_P(\theta)) \;=\; \mathcal{C}(\mathcal{R}_P(\theta))
\;=\; |\sin(\theta)| + |\cos(\theta)|.
\]
This also completes the proof of Lemma~\ref{lemma:optimal_clifford_decomp}.

\section{Proofs of Theorem~\ref{thm:Pauli_insertion_theory}}\label{appendix:prfoofthm1}

In this section, we provide the detailed proof of Theorem~\ref{thm:Pauli_insertion_theory}. We first rewrite Theorem~\ref{thm:Pauli_insertion_theory} as follows.

\begin{theorem}[Rewrite of Theorem~\ref{thm:Pauli_insertion_theory}]
\label{thm:Pauli_insertion_theory_re}
Suppose that each noise channel $\mathcal{E}_i$ in the circuit is a Pauli channel, 
and that its inverse $\mathcal{E}_i^{-1}$ can also be expressed as a linear combination of Pauli channels.  
If there exist coefficients $\{b_{k_1,k_2,\ldots,k_L}\}$ such that, 
for every choice of Pauli insertions $\bm{P}=(\mathcal{P}_1,\ldots,\mathcal{P}_L)$ 
with $\mathcal{P}_i\in\{I,X,Y,Z\}^{\otimes n}$,  
the corresponding noisy circuit
\begin{equation}
  \tilde{\mathcal{U}}(\bm{\theta},\bm{P})
  = \mathcal{P}_L \circ \mathcal{E}_L \circ  \mathcal{U}_L(\theta_L)
  \circ \cdots \circ
  \mathcal{P}_1 \circ\mathcal{E}_1 \circ  \mathcal{U}_1(\theta_1)
\end{equation}
satisfies
\begin{equation}\label{eq:theorem_pauli_insert_condition}
\begin{aligned}
  & \tr{O\,\tilde{\mathcal{U}}(\bm{\theta},\bm{P})(\rho)} 
  =&  \sum_{(k_1,\ldots,k_L)\in\mathcal{J}}
  b_{k_1,\ldots,k_L}\,
  \tr{O\,\mathcal{P}_L \circ \tilde{\mathcal{U}}_{L,k_L}
  \circ \cdots \circ
  \mathcal{P}_1 \circ \tilde{\mathcal{U}}_{1,k_1}(\rho)},
  \end{aligned}
\end{equation}
then these coefficients also satisfy the noiseless decomposition relation
\begin{equation}\label{eq:target_decomp_observable3}
    \begin{aligned}
  & \tr{O\,\mathcal{C}(\bm{\theta})(\rho)}
  = & \sum_{(k_1,\ldots,k_L)\in\mathcal{J}}
  b_{k_1,\ldots,k_L}\,
  \tr{O\,\mathcal{U}_{L,k_L}\circ\cdots\circ\mathcal{U}_{1,k_1}(\rho)}.
\end{aligned}
\end{equation}
\end{theorem}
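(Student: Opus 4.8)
The plan is to exploit the hypothesis that the noisy relation~\eqref{eq:theorem_pauli_insert_condition} holds \emph{simultaneously} for every Pauli insertion pattern $\bm{P}=(\mathcal{P}_1,\dots,\mathcal{P}_L)$: this lets us form real linear combinations of these identities over $\bm{P}$, with weights chosen so that the inserted Paulis at layer $l$ assemble precisely into the inverse noise map $\mathcal{E}_l^{-1}$, which then telescopes against $\mathcal{E}_l$ and eliminates all noise from both sides of the relation.

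First I would use the assumption that each $\mathcal{E}_l$ is a Pauli channel whose inverse is a linear combination of Pauli channels to write
\[
\mathcal{E}_l^{-1}=\sum_{Q_l\in\{I,X,Y,Z\}^{\otimes n}} c^{(l)}_{Q_l}\,\mathcal{Q}_l,\qquad c^{(l)}_{Q_l}\in\mathbb{R},
\]
where $\mathcal{Q}_l(\cdot)=Q_l(\cdot)Q_l^{\dagger}$; such a decomposition exists because a Pauli channel is diagonal in the Pauli operator basis with real eigenvalues $\{\lambda^{(l)}_{Q}\}$, so its inverse is the Pauli-diagonal map with eigenvalues $\{1/\lambda^{(l)}_{Q}\}$, i.e.\ a real combination of the $\mathcal{Q}_l$ (assuming all $\lambda^{(l)}_{Q}\neq0$). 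Next I would multiply the scalar identity~\eqref{eq:theorem_pauli_insert_condition} for each $\bm{P}$ by the weight $\prod_{l=1}^{L} c^{(l)}_{\mathcal{P}_l}$ and sum over all patterns $\bm{P}$. By multilinearity of channel composition the weighted sum factorizes layer by layer; on the left-hand side,
\[
\sum_{\bm{P}}\Big(\prod_{l=1}^{L} c^{(l)}_{\mathcal{P}_l}\Big)\,\tilde{\mathcal{U}}(\bm{\theta},\bm{P})
=\mathcal{E}_L^{-1}\circ\mathcal{E}_L\circ\mathcal{U}_L(\theta_L)\circ\cdots\circ\mathcal{E}_1^{-1}\circ\mathcal{E}_1\circ\mathcal{U}_1(\theta_1)=\mathcal{C}(\bm{\theta}),
\]
using $\mathcal{E}_l^{-1}\circ\mathcal{E}_l=\mathcal{I}$, so the combined left-hand side equals $\tr{O\,\mathcal{C}(\bm{\theta})(\rho)}$. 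Applying the identical reorganization termwise to the right-hand side, with $\tilde{\mathcal{U}}_{l,k_l}=\mathcal{E}_l\circ\mathcal{U}_{l,k_l}$, cancels every $\mathcal{E}_l$ against the corresponding $\mathcal{E}_l^{-1}$ and yields $\sum_{(k_1,\dots,k_L)\in\mathcal{J}} b_{k_1,\dots,k_L}\,\tr{O\,\mathcal{U}_{L,k_L}\circ\cdots\circ\mathcal{U}_{1,k_1}(\rho)}$. Equating the two reorganized sides then gives exactly~\eqref{eq:target_decomp_observable3}.

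I do not expect a genuine obstacle here; the delicate points are purely bookkeeping. One must verify that distributing the sum over the (finitely many) patterns $\bm{P}$ through the $L$-fold composition is valid---which it is, since composition of linear maps is multilinear---and that each inserted single Pauli $\mathcal{P}_l$ occupies exactly the slot immediately after $\mathcal{E}_l$, so that the cancellation $\mathcal{E}_l^{-1}\circ\mathcal{E}_l=\mathcal{I}$ applies directly without invoking any commutation relations between Paulis and the gates $\mathcal{U}_l$. It is also worth emphasizing that the quantifier ``for every $\bm{P}$'' in the hypothesis is precisely what licenses forming this particular weighted combination of equations; were the relation known only for a single insertion pattern, the argument would collapse.
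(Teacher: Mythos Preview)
Your proposal is correct and is essentially the same argument as the paper's proof: write each $\mathcal{E}_l^{-1}$ as a real linear combination of Pauli conjugations, take the corresponding weighted sum of the hypothesis identities over all insertion patterns $\bm{P}$, and telescope $\mathcal{E}_l^{-1}\circ\mathcal{E}_l=\mathcal{I}$ layer by layer on both sides. The paper organizes the same computation slightly differently---it starts from $\tr{O\,\mathcal{C}(\bm\theta)(\rho)}$, inserts $\mathcal{E}_l^{-1}\circ\mathcal{E}_l$ at each layer, expands, then applies the hypothesis termwise---but the content is identical.
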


\begin{proof}
For each Pauli noise channel $\mathcal{E}_i$, we assume that its inverse admits the expansion
\begin{equation}
    \begin{aligned}
         \mathcal{E}_i^{-1}
  = \sum_{r_i=1}^{R_i} p_{i,r_i}\,
  \mathcal{P}_{i,r_i},
  \quad
  \mathcal{P}_{i,r_i}(\rho)
  = P_{i,r_i}\rho P_{i,r_i}^\dagger, 
  P_{i,r_i}\in\{I,X,Y,Z\}^{\otimes n}.
    \end{aligned}
\end{equation}
Using this representation, the noiseless circuit expectation can be rewritten as
\begin{equation}
\begin{aligned}
\tr{O\,\mathcal{C}(\bm{\theta})(\rho)}
&= \tr{O\,\mathcal{U}_L(\theta_L)\circ\cdots\circ\mathcal{U}_1(\theta_1)(\rho)}\\[3pt]
&= \tr{O\,\mathcal{E}_L^{-1}\circ\mathcal{E}_L\circ\mathcal{U}_L(\theta_L)\circ\cdots
\circ\mathcal{E}_1^{-1}\circ\mathcal{E}_1\circ\mathcal{U}_1(\theta_1)(\rho)}\\[3pt]
&= \tr{O\,\mathcal{E}_L^{-1}\circ\tilde{\mathcal{U}}_L(\theta_L)\circ\cdots
\circ\mathcal{E}_1^{-1}\circ\tilde{\mathcal{U}}_1(\theta_1)(\rho)}\\[3pt]
&= \sum_{r_1,\ldots,r_L}
\!\left(\prod_{i=1}^L p_{i,r_i}\right)
\tr{O\,\mathcal{P}_{L,r_L}\circ\tilde{\mathcal{U}}_L(\theta_L)
\circ\cdots\circ
\mathcal{P}_{1,r_1}\circ\tilde{\mathcal{U}}_1(\theta_1)(\rho)}.
\end{aligned}
\end{equation}
By the assumption in Eq.~\eqref{eq:theorem_pauli_insert_condition},  
each noisy circuit term inside the trace can be decomposed as
\begin{equation}
  \tr{O\,\mathcal{P}_{L,r_L}\circ\tilde{\mathcal{U}}_L(\theta_L)
  \circ\cdots\circ
  \mathcal{P}_{1,r_1}\circ\tilde{\mathcal{U}}_1(\theta_1)(\rho)}
  = \sum_{(k_1,\ldots,k_L)\in\mathcal{J}}
  b_{k_1,\ldots,k_L}
  \tr{O\,\mathcal{P}_{L,r_L}\circ\tilde{\mathcal{U}}_{L,k_L}
  \circ\cdots\circ
  \mathcal{P}_{1,r_1}\circ\tilde{\mathcal{U}}_{1,k_1}(\rho)}.
\end{equation}
Substituting this into the previous expression and rearranging terms gives
\begin{equation}
\begin{aligned}
\tr{O\,\mathcal{C}(\bm{\theta})(\rho)}
&= \sum_{(k_1,\ldots,k_L)\in\mathcal{J}} b_{k_1,\ldots,k_L}
\sum_{r_1,\ldots,r_L}
\!\left(\prod_{i=1}^L p_{i,r_i}\right)
\tr{O\,\mathcal{P}_{L,r_L}\circ\tilde{\mathcal{U}}_{L,k_L}
\circ\cdots\circ
\mathcal{P}_{1,r_1}\circ\tilde{\mathcal{U}}_{1,k_1}(\rho)}\\[3pt]
&= \sum_{(k_1,\ldots,k_L)\in\mathcal{J}} b_{k_1,\ldots,k_L}
\tr{O\,\mathcal{E}_L^{-1}\circ\tilde{\mathcal{U}}_{L,k_L}
\circ\cdots\circ
\mathcal{E}_1^{-1}\circ\tilde{\mathcal{U}}_{1,k_1}(\rho)}\\[3pt]
&= \sum_{(k_1,\ldots,k_L)\in\mathcal{J}} b_{k_1,\ldots,k_L}
\tr{O\,\mathcal{U}_{L,k_L}\circ\cdots\circ\mathcal{U}_{1,k_1}(\rho)},
\end{aligned}
\end{equation}
which proves Eq.~\eqref{eq:target_decomp_observable3}.
\end{proof}

\section{Methodological Details of SPD}
\label{appendix:SPD_details}

In this appendix, we summarize the SPD method used throughout the main text, which follows the formulation and implementation described in Ref.~\cite{doi:10.1126/sciadv.adk4321}.

\subsection*{Heisenberg-picture formulation}

SPD estimates expectation values of observables by propagating operators in the Heisenberg picture.
Consider an observable expressed as a linear combination of $n$-qubit Pauli operators,
\begin{equation}
O = \sum_{Q \in \mathcal{P}} a_Q\, Q ,
\end{equation}
where $\mathcal{P}$ is a subset of the $n$-qubit Pauli group and $a_Q \in \mathbb{C}$.
Each term $a_Q Q$ is referred to as a \emph{Pauli path}, and the coefficient $a_Q$ is called the \emph{weight} of the corresponding Pauli path.
For a quantum circuit $U$, the target expectation value is
\begin{equation}
\langle O \rangle = \langle 0 | U^\dagger O U | 0 \rangle .
\end{equation}
SPD evaluates this quantity by evolving the observable under the adjoint action of the circuit,
\begin{equation}
U^\dagger O U \equiv \mathcal{U}^\dagger(O).
\end{equation}

We consider circuits composed of Clifford gates and Pauli rotation gates of the form
\begin{equation}\label{eq:SPD_gate}
U(\theta) := e^{-i \theta P/ 2} C,
\end{equation}
with $P$ a Pauli operator and $C$ a Clifford gate.
Any rotation angle is decomposed as $\theta = \theta' + k \pi/2$, where the Clifford component $k\pi/2$ is absorbed into a Clifford gate and the residual angle satisfies $|\theta'| < \pi/4$.

\subsection*{Pauli path expansion under Pauli rotations}

Applying a Pauli rotation gate $U(\theta)$ defined in Eq.~\eqref{eq:SPD_gate} to an observable $O$ yields
\begin{equation}\label{eq:SPD_after}
U(\theta)^\dagger\, O\, U(\theta)
= \sum_{Q \in \mathcal{P}'} a'_Q Q ,
\end{equation}
where the updated Pauli set is
\begin{equation}
\mathcal{P}' = \{C^\dagger Q C \,|\,Q\in\mathcal{P}\} \cup \{ C^\dagger P Q C\,|\, Q \in \mathcal{P}, \{P,Q\}=0 \}.
\end{equation}
The coefficients transform as
\begin{equation}
a_Q' =
\begin{cases}
a_Q \cos\theta + i a_{P Q} \sin\theta, & \{P,Q\}=0,\\
a_Q, & [P,Q]=0.
\end{cases}
\label{eq:SPD_coeff_update}
\end{equation}
As a consequence, the number of Pauli terms may increase after each non-Clifford rotation gate $U(\theta)$.
In the worst case, repeated Pauli rotations lead to exponential growth of the number of Pauli operators representing the evolved observable.

\subsection*{Truncation and approximate evolution}

To obtain a tractable approximation, SPD introduces a truncation operation after each Pauli rotation gate $U_{\sigma}(\theta)$.
Specifically, the Pauli set $\mathcal{P}'$ resulting from the update rule in Eq.~\eqref{eq:SPD_after} is truncated after each rotation so as to prevent the number of Pauli paths from growing exponentially with the circuit depth.
In the original formulation of SPD~\cite{doi:10.1126/sciadv.adk4321}, only Pauli paths whose weights satisfy $|a_Q|\ge \delta$ are retained for each $Q\in\mathcal{P}'$, where $\delta$ is a prescribed truncation threshold.

In our implementation, in order to retain as many Pauli paths as possible while enabling a transparent characterization of the SPD cost, we instead impose a maximum Pauli path number $M_{\max}$.
Whenever the number of Pauli paths in $\mathcal{P}'$ exceeds $M_{\max}$, the Pauli paths with the smallest weights are discarded.
With this scheme, the parameter $M_{\max}$ directly determines the computational cost of the SPD algorithm and serves as the cost metric used throughout the main text.

\section{Additional comparisons with SPD}\label{appendix:more_compare_SPD}

\begin{figure*}[htb!]
    \centering
    \begin{subfigure}{0.48\columnwidth}
        \centering
        \includegraphics[width=\linewidth]{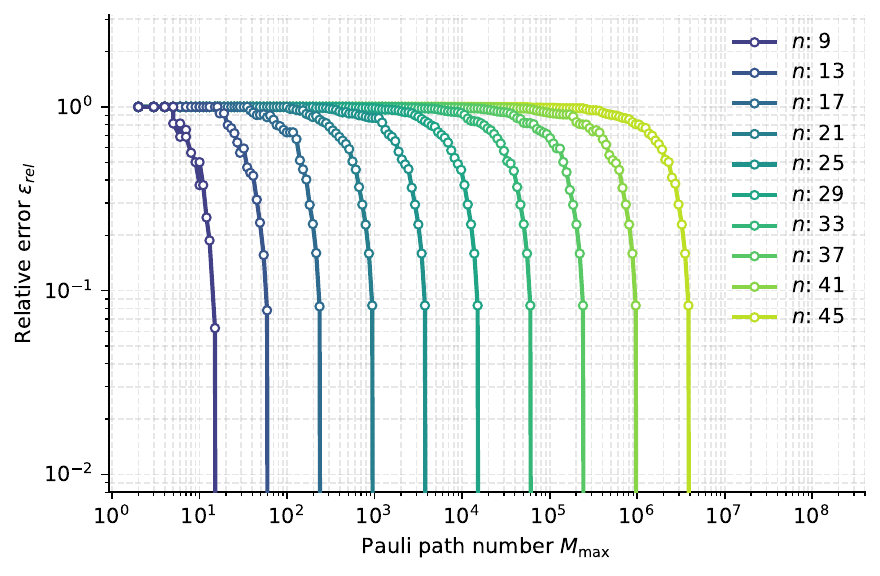}
        \caption{}
        \label{fig:numerical_result_16qubits_abs_error}
    \end{subfigure}
    \vspace{0.5em}
    \begin{subfigure}{0.48\columnwidth}
        \centering
        \includegraphics[width=\linewidth]{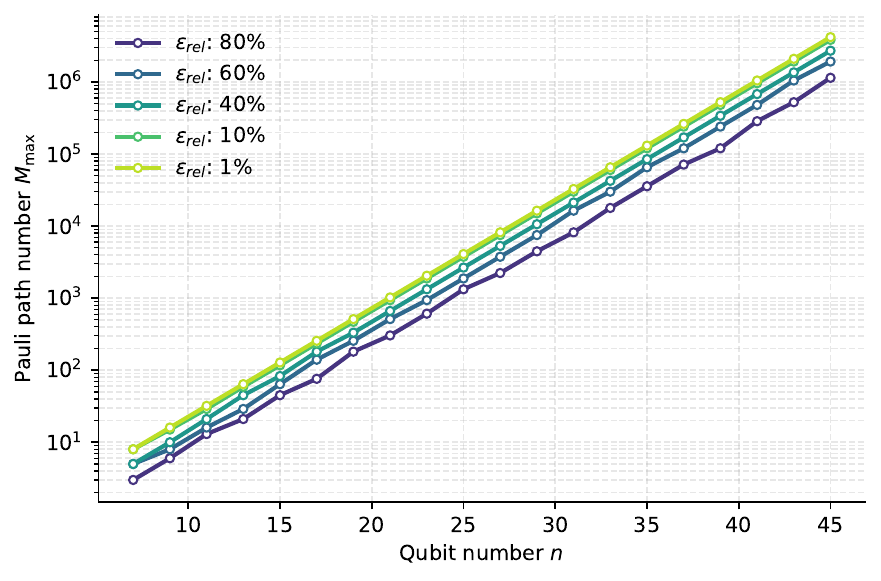}
        \caption{}
        \label{fig:numerical_result_16qubits_relative_error}
    \end{subfigure}
\caption{\justifying
Numerical scaling of SPD for the structured circuit family defined in Fig.~\ref{fig:numerical_circuit_structure_compare_SPD} with $\theta = 0.1$ and $\phi = \pi/4$.
(a) Relative error $\varepsilon_{\mathrm{rel}}$ of the SPD simulation as a function of the maximum number of retained Pauli paths $M_{\max}$ for different qubit numbers $n$.
(b) Number of Pauli paths required to reach fixed relative error thresholds as a function of the qubit number.
}
    \label{fig:artificial_circ_SPD_more}
\end{figure*}

\begin{figure*}[htb!]
 \centering
 \includegraphics[width = 0.98\columnwidth]{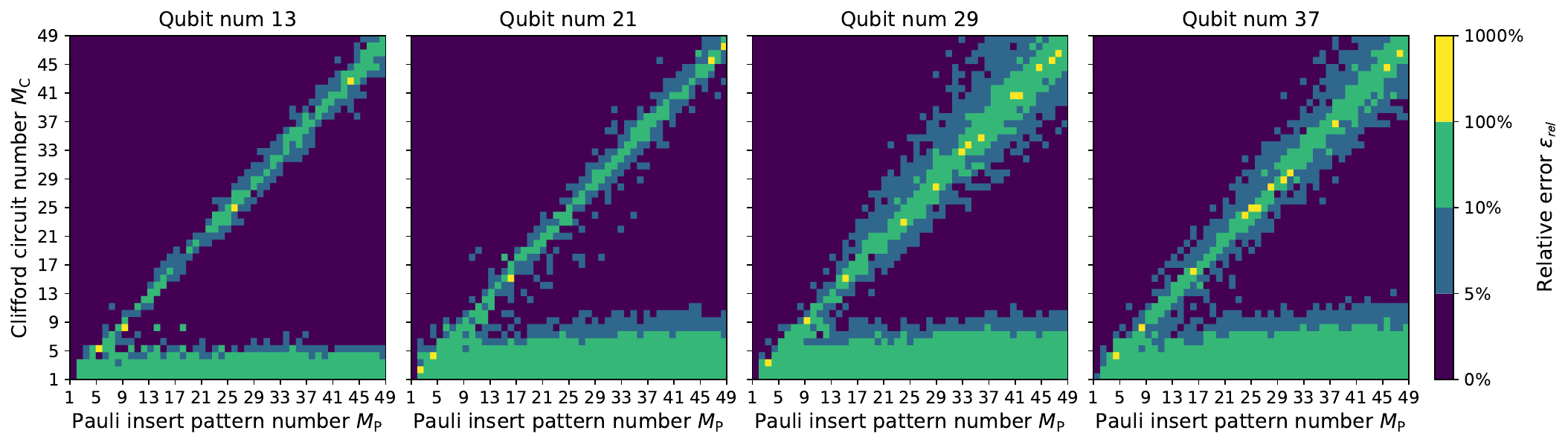}
 \caption{\justifying
 Numerical performance of NDE-CS for the structured circuit family defined in Fig.~\ref{fig:numerical_circuit_structure_compare_SPD} with $\theta = 0.1$ and $\phi = \pi/4$.
Each panel shows the relative error $\varepsilon_{\mathrm{rel}}$ between the NDE-CS estimate and the noiseless circuit expectation as a function of the number of sampled Clifford circuits $M_{\mathrm{C}}$ (vertical axis) and the number of Pauli insertion patterns $M_{\mathrm{P}}$ (horizontal axis), for qubit numbers $n=13,21,29,$ and $37$.
  }
 \label{fig:artificial_circ_NHCS_more}
\end{figure*}

In this section, we provide additional numerical results for the structured circuit family introduced in Section~\ref{subsec:compare_SPD}, focusing on a parameter regime where exact cancellation does not occur. Specifically, for $\theta = 0.1$ and $\phi = \pi/4$, the circuit no longer admits an exact cancellation of the forward and backward blocks, making it not trivially classically simulable.

For SPD, consistent with Fig.~\ref{fig:artificial_circ_SPD}, Fig.~\ref{fig:artificial_circ_SPD_more} shows that the number of Pauli paths required to achieve a fixed target accuracy grows exponentially with the qubit number.

For NDE-CS, Fig.~\ref{fig:artificial_circ_NHCS_more} presents the relative error between the NDE-CS estimate and the noiseless circuit expectation as a function of $M_{\mathrm{C}}$ and $M_{\mathrm{P}}$ for qubit numbers $n=13,21,29,$ and $37$.
Similar to the results obtained at $\theta=0$ and $\phi=\pi/4$, the NDE-CS estimation error can be reduced below $5\%$ using only a small number of noisy Clifford and noisy target circuit evaluations, 
and does not exhibit a significant increase with the qubit number over the range considered, in stark contrast to the scaling observed for SPD.

\end{document}